\theoremstyle{plain}
\newtheorem{prop}{\protect\propositionname}
\providecommand{\propositionname}{Proposition}
\theoremstyle{remark}
\newtheorem*{rem*}{\protect\remarkname}
\providecommand{\remarkname}{Remark}
\theoremstyle{plain}
\newtheorem*{lem*}{\protect\lemmaname}
\providecommand{\lemmaname}{Lemma}
\theoremstyle{definition}
\newtheorem*{defn*}{\protect\definitionname}
\providecommand{\definitionname}{Definition}
\theoremstyle{definition}
\newtheorem{defn}{\protect\definitionname}
\providecommand{\definitionname}{Definition}
\theoremstyle{plain}
\newtheorem*{cor*}{\protect\corollaryname}
\providecommand{\corollaryname}{Corollary}
\definecolor{darkBlue}{rgb}{0.15,0.15,0.65}
\definecolor{lightRed}{rgb}{1,0.7,0.8}
\newcommand{\bSe}{\begin{subequations}}
\newcommand{\eSe}{\end{subequations}}
\newcommand{\mydet}[1]{\operatorname{det}\left(#1\right)}
\newcommand{\dalamb}{\operatorname{\nabla ^2}}
\newcommand{\arel}{$A$-related }
\newcommand{\heta}{\hat{\eta} }
\newcommand{\hxi}{\hat{\xi} }
\renewcommand*\env@matrix[1][\arraystretch]{%
  \edef\arraystretch{#1}%
  \hskip -\arraycolsep
  \let\@ifnextchar\new@ifnextchar
  \array{*\c@MaxMatrixCols c}}
\global\long\def\tr{\mathsf{{\scriptscriptstyle T}}}
\DeclareSymbolFont{rsfso}{U}{rsfso}{m}{n}
\DeclareSymbolFontAlphabet{\mathscr}{rsfso}
\newcolumntype{P}[1]{>{\centering\arraybackslash}p{#1}}
\begin{document}

\renewcommand{\theenumi}{(\roman{enumi})}

\title{Spacetime symmetries and topology in bimetric relativity}

\author{Francesco \surname{Torsello}}
\email{francesco.torsello@fysik.su.se}

\author{Mikica \surname{Kocic}}
\email{mikica.kocic@fysik.su.se}

\author{Marcus \surname{H\"og\aa s}}
\email{marcus.hogas@fysik.su.se}

\author{Edvard \surname{M\"ortsell}}
\email{edvard@fysik.su.se}
\affiliation{Department of Physics \& The Oskar Klein Centre, \\
Stockholm University, AlbaNova University Centre, SE-106 91 Stockholm, Sweden}

\keywords{Modified gravity, ghost-free bimetric theory, bimetric relativity, spacetime symmetries, isometries, Killing vector fields, topology}

\begin{abstract}
We explore spacetime symmetries and topologies of the two metric sectors in Hassan--Rosen bimetric theory. We show that, in vacuum, the two sectors can either share or have separate spacetime symmetries. If stress--energy tensors are present, a third case can arise, with different spacetime symmetries within the same sector. This raises the question of the best definition of spacetime symmetry in Hassan--Rosen bimetric theory. We emphasize the possibility of imposing ansatzes and looking for solutions having different Killing vector fields or different isometries in the two sectors, which has gained little attention so far. We also point out that the topology of spacetime imposes a constraint on possible metric combinations.
\end{abstract}
\maketitle

\global\long\def\diag{\operatorname{diag}}
\global\long\def\dd{\mathrm{d}}
\global\long\def\ee{\mathrm{e}}
\global\long\def\ii{\mathrm{i}}
\global\long\def\eH{\mathrm{{\scriptscriptstyle H}}}

\newcommand{\e}{\mathrm{e}}
\newcommand{\rhg}{r^{g}_\mathrm{\scriptstyle{H}}}
\newcommand{\rhf}{r^{f}_\mathrm{\scriptstyle{H}}}

\section{Introduction}
\label{sec:introduction}

Symmetries are fundamental in physics. The invariance of physical quantities under a given transformation allows physicists to simplify their models of Nature and to better understand them.

The first computed exact solution in general relativity (GR) is the Schwarzschild solution \cite{Schwarzschild:1916uq}, which was determined by assuming a static and spherically symmetric spacetime. Schwarzschild's approach has been widely and fruitfully used until our days. Indeed, the standard way for finding particular solutions of certain field equations is to impose some symmetries to the system, and deduce the appropriate ansatz for the dynamical fields.

In (pseudo-)Riemannian geometry, spacetime symmetries are characterized by 
Killing vector fields (KVFs), whose integral curves define paths along 
which the physical quantities can be invariant. When this happens, the standard terminology is that the physical quantity possesses a symmetry along that path (or that the path defines a ``collineation" for the physical quantity). In the case of a symmetry of the metric tensor, they coincide with its isometries. We refer the reader to \cite{Wald:1984rg} for a basic treatment of spacetime symmetries and to \cite{hall2004symmetries} for a more advanced approach.

In this paper we explore the relation between the KVFs of two metrics defined on the same differentiable manifold. The 
motivation for this study comes from the Hassan--Rosen (HR) bimetric theory 
(introduced 
in \autoref{sec:HR}), where this geometrical framework naturally 
appears. When two metrics are concerned, the question of whether an isometry 
of one metric is a spacetime symmetry for the whole system arises. We will see 
that, if the two metrics share their isometries, then the latter will be spacetime symmetries for all other tensor fields. In this case, it is 
straightforward to talk about a spacetime symmetry, 
since 
the whole system is invariant under the transformation. If this is not the case, 
however, the definition of a spacetime symmetry is ambiguous. Solutions having different KVFs and showing this ambiguity, in the context of the HR bimetric theory, are presented in \autoref{subsec:explicit}.

When finding particular solutions, the most common ansatzes presume the same isometries in both sectors (some exceptions can be found in \cite{PhysRevD.86.061502,Nersisyan:2015oha}). This study investigates when this is the most general ansatz, and if 
there are other possible ansatzes which can, in principle, lead to 
new solutions of the bimetric field equations. This turns out to be the case.

We also answer to the question if there are any constraints on the solutions of HR bimetric theory, based on the underlying topology of the spacetime. The answer turns out to be yes, as we will discuss in \autoref{subsec:HR}.

The paper is organized in two main sections. In \autoref{sec:HR}, we focus on HR bimetric theory, reviewed in \autoref{subsec:HR}. We start by discussing explicit examples having the described properties in \autoref{subsec:explicit}. These examples should be thought of as a selection of cases showing different possible configurations of spacetime symmetries in HR theory. In principle, they may or may not provide any other physical information. In \autoref{subsec:isometriesHR} we study spacetime symmetries and we fit the examples into a more general framework. We determine a conserved vector current and state two propositions helpful for understanding the structure of spacetime symmetries in the theory. We define the concepts of ``bimetric spacetime symmetry", ``sectoral spacetime symmetry" and ``narrow spacetime symmetry" in bimetric relativity, as symmetries shared by all tensors, by the tensors in one metric sector only and by some generic tensors, respectively.
In Appendix \ref{app:ansatz}, we provide some strategies for constructing bimetric ansatzes having different KVFs. In \autoref{sec:geometry}, we fit the results of \autoref{sec:HR} into a wider perspective. In particular, we summarize the main geometric concepts and results we have used in studying spacetime symmetries. We explore the topic in a generic setup, i.e., we do not assume any relation between the two metrics, other than they are defined on the same differentiable manifold. After that, we state our conclusions.

The reader specifically interested in HR bimetric theory can find all the related results in \autoref{sec:HR} and Appendix \ref{app:ansatz}. The paper being quite technical---especially \autoref{sec:geometry}---the reader interested mainly in the results and their discussion is invited to read the brief summaries at the end of \autoref{sec:HR} and the beginning of \autoref{sec:geometry}, and the conclusions.

\section{Isometries in the Hassan-Rosen bimetric theory}
\label{sec:HR}

Recently, there has been a noteworthy interest in constructing a classical nonlinear theory of interacting spin-2 fields. This interest culminated in 
the discovery of two such theories which are free from the pathological Boulware--Deser ghost \cite{Boulware:1973my}, thanks to the particular form of the interaction potential between the spin-2 tensor fields. These theories are the de Rham--Gabadadze--Tolley (dRGT) massive gravity \cite{deRham:2010ik,deRham:2010kj}, which was proven to be ghost-free in \cite{Hassan:2011hr}, and the Hassan--Rosen (HR) bimetric theory \cite{Hassan:2011zd,Hassan:2011ea}, whose unambiguous definition and spacetime interpretation are provided in \cite{Hassan:2017ugh}.

We focus on the HR bimetric theory, describing two interacting, dynamical and symmetric spin-2  fields, $g_{\mu\nu}$ and 
$f_{\mu\nu}$. 
It has seven degrees of freedom propagating  
five massive and two massless mode around proportional backgrounds \cite{Comelli:2011wq,Hassan:2012wr,Hassan:2012rq}. One possible interpretation 
of this theory is to consider both the symmetric spin-2 fields as metrics defined 
on the same differentiable manifold. Here, we follow this interpretation and, 
therefore, aim to understand the relations between the KVFs of the two metrics and, consequently, between their isometries.

\subsection{Review of the theory, and topological constraint}
\label{subsec:HR}

The action of HR bimetric theory is \cite{Hassan:2011zd},
\begin{align}
\label{eq:Action}
\mathscr{S} & =\int\dd^{4}x\sqrt{-g}\left[\left(\frac{1}{2}M_{g}^{2}\,R_{g}+\mathcal{L}_g\right)-m^{4}V(S)\right. \nonumber \\
			&\qquad\qquad\qquad\ \ \,  \left. +\mydet{S} \left(\frac{1}{2}M_{f}^{2}\,R_{f}+\mathcal{L}_f\right)\right]
\end{align}
where two Einstein--Hilbert actions constitute kinetic terms for both metrics, minimally coupled to two independent matter sources described by the lagrangian densities $\mathcal{L}_g$, $\mathcal{L}_f$ (see, e.g., \cite{deRham:2014naa} regarding ghost-free matter couplings). The ``bimetric potential" is defined as,
\begin{equation}
\label{eq:potential}
V(S)\coloneqq\sum_{n=0}^{4}\beta_{n}e_{n}\left(S\right).
\end{equation}
Here, $S$ is the square root matrix defined by,
\begin{equation}
\label{eq:squareroot}
S\coloneqq \left(g^{-1}f\right)^\frac{1}{2},
\end{equation}
or, in index notation, ${S^\mu}_\rho{S^\rho}_\nu\coloneqq g^{\mu\rho}f_{\rho\nu}$.
The $\beta _n$ parameters are generic real numbers and the $e_{n}(S)$ are the elementary symmetric polynomials of ${S^\mu}_\nu$,
\begin{align}
\label{eq:ESP}
e_{0}\left(S\right)&=1, \nonumber\\
e_{1}\left(S\right)&=\lambda_1+\lambda_2+\lambda_3+\lambda_4, \nonumber \\
e_{2}\left(S\right)&=\lambda_1\lambda_2+\lambda_1\lambda_3+\lambda_1\lambda_4+\lambda_2\lambda_3+\lambda_2\lambda_4+\lambda_3\lambda_4, \nonumber \\
e_{3}\left(S\right)&=\lambda_1\lambda_2\lambda_3+\lambda_1\lambda_2\lambda_4+\lambda_1\lambda_3\lambda_4+\lambda_2\lambda_3\lambda_4, \nonumber \\
e_{4}\left(S\right)&=\lambda_1\lambda_2\lambda_3\lambda_4,
\end{align}
where the $\lambda_n$ are the eigenvalues of $S{}^\mu{}_\nu$.

Varying \eqref{eq:Action} with respect to $g_{\mu\nu}$ and $f_{\mu\nu}$, we get the bimetric field equations,
\begin{subequations}
\label{eq:EoM}
\begin{align}
\label{eq:EoMg}G_{g}{}^\mu{}_\nu+\dfrac{m^4}{M_{g}^2}\,V_{g}{}^\mu{}_\nu & =\dfrac{1}{M_g^2}T_g{}^\mu{}_\nu, \\
\label{eq:EoMf}G_{f}{}^\mu{}_\nu+\dfrac{m^4}{M_{f}^2}\,V_{f}{}^\mu{}_\nu & =\dfrac{1}{M_f^2}T_f{}^\mu{}_\nu,
\end{align}
\end{subequations}
with $G_{g,f}{}^\mu{}_\nu$ being the Einstein tensors for $g_{\mu\nu}$ and $f_{\mu\nu}$, $T_{g,f}{}^\mu{}_\nu$ being the stress--energy tensors for the two independent matter sources and $V_{g,f}{}^\mu{}_\nu$, here referred to as the ``tensor potentials."
being equal to,
\begin{subequations}
\begin{align}
\label{eq:Vg}
V_{g}{}^\mu{}_\nu & =\sum_{n=0}^3\beta_{n} \sum_{k=0}^{n}(-1)^{n+k}e_{k}(S)\,S^{n-k},\\
\label{eq:Vf}
V_{f}{}^\mu{}_\nu & =\sum_{n=0}^3\beta_{4-n} \sum_{k=0}^{n}(-1)^{n+k}e_{k}(S^{-1})\,S^{-n+k}.
\end{align}
\end{subequations}
Combining \eqref{eq:EoM} with their own traces we obtain,
\begin{subequations}
\label{eq:EoM2}
\begin{align}
\label{eq:EoM2g}R_{g}{}^\mu{}_\nu 	& =\dfrac{1}{M_{g}^2}\left[m^4\left(\dfrac{1}{2}V_g\delta {}^\mu{}_\nu-V_{g}{}^\mu{}_\nu \right)+T_g{}^\mu{}_\nu-\dfrac{1}{2}T_g\delta{}^\mu{}_\nu\right], \\[4mm]
\label{eq:EoM2f}R_{f}{}^\mu{}_\nu 	& =\dfrac{1}{M_{f}^2}\left[m^4\left(\dfrac{1}{2}V_f\delta {}^\mu{}_\nu-V_{f}{}^\mu{}_\nu \right)+T_f{}^\mu{}_\nu-\dfrac{1}{2}T_f\delta{}^\mu{}_\nu\right],
\end{align}
\end{subequations}
where $T_g=g_{\alpha\beta}{T_{g}}^{\alpha\beta}$, $T_f=f_{\alpha\beta}{T_{f}}^{\alpha\beta}$, $V_g=g_{\alpha\beta}{V_{g}}^{\alpha\beta}$ and $V_f=f_{\alpha\beta}{V_{f}}^{\alpha\beta}$.

Noting that the stress--energy tensors $T_g{}^\mu{}_\nu$ and $T_f{}^\mu{}_\nu$ are divergence-less due to diffeomorphism invariance of the matter actions $\int \dd ^4x\sqrt{-g}\,\mathcal{L}_g$ and $\int \dd ^4x\sqrt{-g}\mydet{S}\,\mathcal{L}_f$, the Bianchi constraints follows from taking the divergence of \eqref{eq:EoM},
\begin{align}
\label{eq:Bianchi}
\nabla_\nu \,V_g{}^\nu{}_\mu=0, \quad \tilde{\nabla}_\nu \,V_f{}^\nu{}_\mu=0,
\end{align}
where $\nabla _\mu$ is the compatible covariant derivative of $g_{\mu\nu}$ and $\tilde{\nabla} _\mu$ is the compatible covariant derivative of $f_{\mu\nu}$.

The following algebraic relation holds between the tensor potentials and the bimetric potential \cite{Hassan:2014vja},
\begin{align}
\label{eq:potentials}
V_g{}^\mu{}_\nu+\mydet{S}\,V_f{}^\mu{}_\nu=V(S)\,\delta{}^\mu{}_\nu,
\end{align}
which implies,
\begin{align}
\label{eq:potentialsTrace}
V_g+\mydet{S}\,V_f=4\,V(S).
\end{align}                                                                                                                                                                                                                                     
The interaction action $-m^4\int \dd ^4x\sqrt{-g}\,V(S)$ is invariant under generic diffeomorphisms. This implies that the two Bianchi constraints are not independent,
\begin{align}
\label{eq:Bianchi2}
\nabla_\nu \,V_g{}^\nu{}_\mu=-\mydet{S}\tilde{\nabla}_\nu \,V_f{}^\nu{}_\mu.
\end{align}
This was proved in \cite{Damour:2002ws} for diffeomorphisms equal to the identity at the boundary of the integration domain, and can be generalized to generic diffeomorphisms, thanks to the algebraic identity \eqref{eq:potentials}.

\paragraph{Topological constraint.}
The importance of topology when considering particular solutions in HR bimetric theory was briefly pointed out in \cite{Kocic:2017wwf}. Here we extend the discussion, which is independent of any field equations and it is applicable to every theory involving more than one metric tensor on the same differentiable manifold.

The action \eqref{eq:Action} is defined on a differentiable manifold, on which we have a set of smooth charts covering it (an atlas). The interaction term in the action constrains the theory to have only one diffeomorphism invariance (see, e.g., \cite{Damour:2002ws,Schmidt-May:2015vnx}).\footnote{Without the bimetric potential, the action \eqref{eq:Action} reduces to two decoupled copies of GR. In such case, the gauge group $G$ of the theory would be the direct sum of two separate diffeomorphism groups, one acting on the $g$-sector only and the other on the $f$-sector only \cite{Boulanger:2000rq}. The bimetric potential reduces $G$ to its diagonal subgroup \cite{Damour:2002ws}.} Therefore, the chosen atlas must be shared by both metrics. Since an atlas uniquely determines the topology of the manifold \cite[p.~20]{lang1995differential}, the two metrics must be compatible with this topology. We stress that topology is not determined by any field equations, which, being differential, are always local \cite[Figure 31.5]{misner1973gravitation}, \cite{PhysRev.128.919}.

We will clarify this issue in the next subsection when discussing explicit examples.

\subsection{Explicit examples with different KVFs}
\label{subsec:explicit}

We start out by discussing some explicit examples displaying different KVFs or different isometries, or both. In \autoref{subsec:isometriesHR}, we see how they fit into the general results.

In the literature, solutions of this kind were presented in \cite{PhysRevD.86.061502,Nersisyan:2015oha}. In particular, \cite{PhysRevD.86.061502} considers a non-bidiagonal ansatz with a Friedmann--Lema\^itre--Robertson--Walker (FLRW) $g_{\mu\nu}$ (homogeneous and isotropic) and an inhomogeneous $f_{\mu\nu}$, with a homogeneous perfect fluid coupled to $g_{\mu\nu}$. In their analysis, the authors of \cite{Nersisyan:2015oha} discuss (i) bidiagonal cosmological solutions with an FLRW $g_{\mu\nu}$ (homogeneous and isotropic) and a Lema\^itre $f_{\mu\nu}$ (only isotropic) with an inhomogeneous perfect fluid coupled to $g_{\mu\nu}$, and (ii) an FLRW $g_{\mu\nu}$ and a Bianchi Type I $f_{\mu\nu}$, with an anisotropic fluid coupled to $g_{\mu\nu}$. The three cases concern metrics having different isometry groups. Note also that the solutions with the inhomogeneous and the anisotropic perfect fluids fit in the discussion in \autoref{subsec:isometriesHR} about the definition of a spacetime symmetry in HR bimetric theory.

We remark that the topological constraints described in the previous section should be taken into account also for these solutions.

One way of finding solutions in GR is to put some assumptions on a metric and then generate the corresponding stress--energy tensor defined through $T^\mu{}_\nu \coloneqq M_g^2\,G^\mu{}_\nu$. This is called the ``Synge's method" \cite[Chapter~VIII]{synge1960relativity}. It is worth to stress that one has to be very lucky to find an ``acceptable" or non-pathological stress--energy tensor with this method. A typical example is the generalized Vaidya case where one assumes a specific form of the mass function, then trying to interpret the resulting stress--energy tensor (an arbitrary mass function will not work in general, but some will be ``sensible''; see section 6 in \cite{Ibohal:2004kk}).
In the bimetric case, we can similarly put some assumptions on both $g_{\mu\nu}$ and $S^\mu{}_\nu$ (or alternatively $g_{\mu\nu}$ and $h_{\mu\nu}$ so that $S^\mu{}_\nu=g^{\mu\rho} h_{\rho\nu}$) and then generate $T_g^\mu{}_\nu$ and $T_f^\mu{}_\nu$ using the bimetric field equations \eqref{eq:EoM}.
This is how the axially symmetric solution was obtained in \cite{Kocic:2017hve}.  
Moreover, one can also require some of the tensors to vanish identically, restricting the ansatz; for instance, $V_g^\mu{}_\nu=V_f^\mu{}_\nu=0$ or one can require $T_g^\mu{}_\nu=0$ but keeping arbitrary $T_f^\mu{}_\nu\ne0$. 
Of course, in the end the solution depends on the judgment and taste of what is physical or ``sensible,'' though some consistency checks must always be satisfied, like conservation laws.
Now, in general, $g_{\mu\nu}$ can have some collineations where $S{}^\mu{}_\nu$ can spoil them by removing or introducing new collineations.  Hence, $g_{\mu\nu}$, $f_{\mu\nu}$, $S{}^\mu{}_\nu$, $T_g^{\mu\nu}$ and $T_f^{\mu\nu}$ can have different isometries.

We discuss three examples:
\begin{enumerate}
	\item a vacuum solution of the bimetric field equations,
	\item a solution with two independent matter sources coupled with $g_{\mu\nu}$ and $f_{\mu\nu}$ in a very peculiar way,
	\item Einstein vacuum solutions with a freely specifiable scalar function in one metric.
\end{enumerate}

We emphasize that these examples are meant to show different possible configurations of spacetime symmetries in HR theory, rather than being useful as descriptions of physical systems.

\subsubsection*{Example I: Bi-Einstein vacuum solutions}

The first example constitutes a family of non-bidiagonal solutions including black holes (BHs) which are allowed to have different Killing horizons. In the bidiagonal case, corresponding to the Type I configuration in \cite{Hassan:2017ugh}, this is forbidden by the proposition in \cite{Deffayet:2011rh} and its extension in \cite[Appendix C]{PhysRevD.96.064003}. The following solution being non-bidiagonal, is allowed to have different Killing horizons. The same family of solutions was found in \cite{Comelli:2011wq}, but here we stress the fact that some of them, although satisfying the bimetric field equations, must be excluded due to the topological constraint presented in \autoref{subsec:HR}. In addition, we discuss the solution in a new perspective, analysing the isometries of the two metric sectors.

We assume a spherically symmetric ansatz for the two metrics and work in the outgoing Eddington--Finkelstein chart $(v,r,\theta,\phi)$,
\begin{align}
g_{\mu\nu}&=
	 \begin{pmatrix}
	   	-G(r) & 1 & 0 & 0 \\
	   	1 & 0 & 0 & 0 \\
	   	0 & 0 & r^2 & 0 \\
	   	0 & 0 & 0 & r^2 \sin^2(\theta ) \\
	  \end{pmatrix}\!, \nonumber \\
f_{\mu\nu}&=
	 \begin{pmatrix}
	   	-\e ^{2q(r)}F(r) & \e ^{2q(r)} & 0 & 0 \\
	   	\e ^{2q(r)} & 0 & 0 & 0 \\
	    0 & 0 & R^2(r)r^2 & 0 \\
	   	0 & 0 & 0 & R^2(r)r^2 \sin^2 (\theta ) \\
	  \end{pmatrix}\!, \nonumber \\
S{}^\mu{}_\nu&=
	 \begin{pmatrix}
	   	\e ^{q(r)} & 0 & 0 & 0 \\
	   	\dfrac{1}{2}\e ^{q(r)}\left(G(r)-F(r)\right) & \e ^{q(r)} & 0 & 0 \\
	    0 & 0 & R(r) & 0 \\
	   	0 & 0 & 0 & R(r) \\
	  \end{pmatrix}\!,
\end{align}
where $S{}^\mu{}_\nu$ is the principal square root of $g^{-1}f$.

Assuming that $G\neq F$, from the field equations \eqref{eq:EoMg},
\begin{align}
\label{eq:examplevacuumg}
R(r)\equiv R_0&=\dfrac{1}{\beta _3}\left( -\beta _2\pm \sqrt{\beta _2 ^2-\beta _1\beta _3} \right), \nonumber \\
q(r)&= \log (R_0), \nonumber \\
G(r)&=1-\dfrac{\rhg}{r}+\dfrac{\Lambda_g (R_0;\{\beta_i\})}{3}r^2, \nonumber \\
\Lambda_g (R_0;\{\beta_i\})&=\dfrac{m^2}{\beta_3}\left( \beta_1 \beta _2-\beta_0\beta_3+2R_0\left( \beta_2^2-\beta_1\beta_3 \right) \right),
\end{align}
where $i$ runs from 0 to 3. This solutions also satisfies the Bianchi constraints \eqref{eq:Bianchi}. Substituting the expressions in \eqref{eq:examplevacuumg} in the field equations \eqref{eq:EoMf} results in,
\begin{align}
\label{eq:examplevacuumf}
F(r)&=1-\dfrac{\rhf}{r}+\dfrac{\Lambda_f (R_0;\{\beta_i\})}{3}r^2, \nonumber \\
\Lambda_f (R_0;\{\beta_i\})&=\pm\dfrac{m^2}{3\kappa\left( 2\beta_2+R_0\beta _3 \right)}\nonumber \\
						   &\quad \times\left[ 2\beta_1\left(\beta_3-\dfrac{\beta_2\beta_4}{\beta_3} \right)\right.\nonumber\\
						   &\quad \quad \left.+R_0\left(3\beta_2\beta_3+\beta_1\beta_4-\dfrac{4\beta^2_2\beta_4}{\beta_3} \right)\right].
\end{align}
The scalar invariants of the square root matrix are always finite, e.g.
\begin{align}
\label{eq:tracedet}
\operatorname{Tr}(S)=4R_0, \quad \operatorname{det}(S)=R_0^4.
\end{align}
The Ricci scalars are constants and the Kretschmann scalars diverge as $r^{-6}$ 
when $r\rightarrow 0$, if $\rhg$ and $\rhf$ are non-zero.

Both metrics can be diagonalized (not simultaneously) to assume the usual Schwarzschild--anti--de Sitter/Schwarzschild--de Sitter (SAdS/SdS) form, by the general coordinate transformations, for $g_{\mu\nu}$,
\begin{subequations}
\label{eq:changecoord}
\begin{align}
\dd v&=\dd \tilde{t}+G^{-1}\dd \tilde{r} \\[1mm]
\dd r&=\dd\tilde{r},
\end{align}
\end{subequations}
and for $f_{\mu\nu}$,
\begin{subequations}
\begin{align}
\dd v&=R_0^{-1}\left(\dd \hat{t}+F^{-1}\dd\hat{r}\right) \\[1mm]
\dd r&=R_0^{-1}\dd\hat{r}.
\end{align}
\end{subequations}

If we do not take into account the topological constraint of \autoref{subsec:HR}, we can have many possible metric combinations arising from \eqref{eq:examplevacuumg} and \eqref{eq:examplevacuumf}. The horizon radii $\rhg$ and $\rhf$ 
are integration constants of the solutions, as in GR. If $\rhg \neq 0$ and $\rhf=0$, we 
will have the SAdS/SdS solution in $g_{\mu\nu}$ and the 
anti-de Sitter/de Sitter (AdS/dS) in $f_{\mu\nu}$, which do not share their isometry 
groups. AdS and dS are both maximally symmetric \cite{Moschella2006}, whereas SAdS and SdS are not. For $\rhg=\rhf=0$, we can still 
have AdS in one sector and dS in the other, depending on the values of 
$\Lambda_f$ and $\Lambda_g$, again implying different isometry groups. Also, 
one can fix two of the $\beta$ parameters, say $\beta_0$ and 
$\beta_4$, to set $\Lambda _g=\Lambda_f=0$. In that case, we have two 
Schwarzschild solutions having the same isometry groups, but different KVFs. The KVFs for some of these solutions are explicitly computed in a Wolfram Mathematica 11 \cite{mathematica} notebook, which is attached to the paper and available at \href{https://tinyurl.com/y9r7nuuh}{this link (click here)}.\footnote{Explicitly, the link is \href{https://tinyurl.com/y9r7nuuh}{https://tinyurl.com/y9r7nuuh}.}

All these combinations satisfy the bimetric field equations. However, we must also take into account the spacetime topology, as explained in \autoref{subsec:HR}. Since we use the same atlas for both sectors, they must have a common topology. To be precise, our Eddington--Finkelstein chart $(v,r,\theta,\phi)$ does not cover the whole spacetime. However, since the Penrose diagrams of all the possible solutions described by \eqref{eq:examplevacuumg}-\eqref{eq:examplevacuumf} are known \cite{Hawking:1973uf,PhysRevD.15.2738,Charmousis2011}, we know their topology. Hence, we can build an atlas covering the whole spacetime, uniquely determining the topology. As a result, the remaining possible solutions are written in \autoref{tab:allowedmetrics}.
\begin{table*}[t]
\renewcommand{\arraystretch}{1.3}
\setlength{\tabcolsep}{8pt}
	\centering
	\begin{tabular}{P{0.1\textwidth}P{0.1\textwidth}!{\color{white!30!gray}\vrule}P{0.05\textwidth}P{0.05\textwidth}P{0.05\textwidth}P{0.05\textwidth}P{0.05\textwidth}P{0.05\textwidth}P{0.05\textwidth}P{0.05\textwidth}}
	\arrayrulecolor{black}\toprule
 	\multicolumn{2}{c!{\color{white!30!gray}\vrule}}{
 	\diagbox[
 	linewidth=0.075mm,
 	linecolor=white!30!gray,
 	innerwidth=40mm,
 	innerleftsep=13mm, innerrightsep=13mm,
 	outerleftsep=0mm, outerrightsep=0mm]
 	{$g_{\mu\nu}$}{$f_{\mu\nu}$}
 	} & M & S & AdS & dS & SAdS & SdS  \\
	\arrayrulecolor{white!30!gray}\hline
	\rule{0pt}{5mm}
	$\mathbb{R}^4$ & M & \checked & $\times$ & $\bullet$ & $\times$ & $\times$ & $\times$ \\
	$\mathbb{R}^2\times\mathbb{S}^2$ & S & $\times$ & \checked & $\times$ & $\times$ & $\bullet$ & $\times$ \\
	$\mathbb{S}^1\times \mathbb{R}^3$ & AdS & $\bullet$ & $\times$ & \checked & $\times$ & $\times$ & $\times$  \\
	$\mathbb{R}^1\times \mathbb{S}^3$ & dS & $\times$ & $\times$ & $\times$ & \checked & $\times$ & $\times$ \\
	$\mathbb{S}^1\times \mathbb{R}^1 \times \mathbb{S}^2$ & SAdS & $\times$ & $\bullet$ & $\times$ & $\times$ & \checked & $\times$ \\
	$\mathbb{R}^1\times \mathbb{S}^1 \times \mathbb{S}^2$ & SdS & $\times$ & $\times$ & $\times$ & $\times$ & $\times$ & \checked \\
	\arrayrulecolor{white!30!gray}\cline{1-2}
	Topology 	& Solution &  &  &  &  &  & \\
	\arrayrulecolor{black}\midrule \\[-3.5mm]
	\multicolumn{8}{l}{Comments:} \\[-3.5mm]
	\multicolumn{8}{p{0.75\textwidth}}{\footnotesize
		\begin{enumerate}
			\itemsep=0em
			\item ``M" stands for Minkowski and ``S" stands for Schwarzschild.
			\item For AdS and SAdS, one can consider the universal covering in the timelike direction in order to avoid violation of causality, ``unwrapping" $\mathbb{S}^1$ into $\mathbb{R}^1$. This would eliminate the closed timelike curves present in AdS and SAdS, and would assign them the topologies, respectively, $\mathbb{R}^4$ and $\mathbb{R}^2\times \mathbb{S}^2$ \cite{Charmousis2011,Hawking:1973uf}.
			\item $\bullet$ stands for \checked \hspace{1mm}if we take the universal coverings of AdS and SAdS, having topologies $\mathbb{R}^4$ and $\mathbb{R}^2\times \mathbb{S}^2$ (see next comment), otherwise it stands for $\times$.
			\item $\mathbb{S}^2$ in the spatial topology of Schwarzschild, SAdS and SdS is necessary due to the presence of an event horizon, for stationary black holes (then, it holds also for static black holes) in 4 dimensions \cite[p.~323]{Wald:1984rg},\cite{Hawking1972}.
			\item Note that the topologies can be matched in some cases. For example, one can have a punctured Minkowski  with $\mathbb{R}^4\backslash\{r=0\}$ rather than M in the M and S case.
		\end{enumerate}			
	} \\
	\arrayrulecolor{black}\bottomrule
	\end{tabular}
	\caption{Metric combinations for the solution \eqref{eq:examplevacuumg}--\eqref{eq:examplevacuumf}, which are allowed and forbidden by the topological constraint.}
	\label{tab:allowedmetrics}
\end{table*}

We now discuss the relation between the KVFs and the isometry groups of the possible solutions. If $g_{\mu\nu}$ and $f_{\mu\nu}$ are different, the isometry groups and the KVFs are also different. Consider the case where $g_{\mu\nu}$ and $f_{\mu\nu}$ belong to the same ``class"--- i.e., they have the same form, but different numerical values of their parameters. First, when $0\neq \Lambda_g \neq \Lambda_f \neq 0$, the KVFs of $g_{\mu\nu}$ and $f_{\mu\nu}$ are different, because they depend on the specific values of the cosmological constants (see e.g. \cite{Henneaux1985} and \cite{Jamal2017} for the explicit expressions of the KVFs of, respectively, AdS and dS). However, their isometry group is the same. Second, when $\Lambda_g =\Lambda _f=0$, we have the bi-Schwarzschild or the bi-Minkowski solution. In this case, the isometry groups are again the same, but the KVFs are different and related by a diffeomorphism $\varphi$. If $\xi^\mu$ and $\eta^\mu$ are KVFs of $f_{\mu\nu}$ and $g_{\mu\nu}$, respectively, we have
\begin{align}
\label{eq:pushforward}
\xi^\mu(\varphi^\beta(x^\alpha))=\left(\varphi_*\right){}^\mu{}_\nu\eta^\nu(x^{\alpha}),
\end{align}
with $\varphi$ being the diffeomorphism. This means that the components of the KVFs are different in the same chart $x^\alpha=(v,r,\theta,\phi)$. We will return to \eqref{eq:pushforward} in \autoref{sec:geometry}.

\subsubsection*{Example II: Bi-Minkowski with ``screened" matter sources}

The relevance of this example resides in having metrics with the same isometry group, but different KVFs. Also, it shows the possibility of having nongravitational matter screened by the ``effective stress--energy tensors" $V_{g}{}^\mu{}_\nu$ and  $V_{f}{}^\mu{}_\nu$.

This example again exhibits a diffeomorphism between the $g$-sector and the $f$-sector, analogously to \eqref{eq:pushforward}. We consider two Minkowski metrics in a chart $(t,r,\theta,\phi)$ which constitutes the usual spherical polar chart for $g_{\mu\nu}$, but not for $f_{\mu\nu}$. In particular, we use the following bidiagonal ansatz,
\begin{align}
\label{eq:doubleMinkowski}
g_{\mu\nu}&=
	 \begin{pmatrix}
	   	-1 & 0 & 0 & 0 \\
	   	0 & 1 & 0 & 0 \\
	   	0 & 0 & r^2 & 0 \\
	   	0 & 0 & 0 & r^2 \sin^2 (\theta ) \\
	  \end{pmatrix}\!, \nonumber \\
f_{\mu\nu}&=
	 \begin{pmatrix}
	   	-\tau^{\prime\,2}(t) & 0 & 0 & 0 \\
	   	0 & \rho^{\prime\,2}(r) & 0 & 0 \\
	    0 & 0 & \rho^2(r) & 0 \\
	   	0 & 0 & 0 & \rho^2(r) \sin^2 (\theta ) \\
	  \end{pmatrix}\!, \nonumber \\
S{}^\mu{}_\nu&=
	 \begin{pmatrix}
	   	\tau'(t) & 0 & 0 & 0 \\
	   	0 & \rho'(r) & 0 & 0 \\
	    0 & 0 & \rho(r)/r & 0 \\
	   	0 & 0 & 0 & \rho(r)/r \\
	  \end{pmatrix}\!,
\end{align}
where $S{}^\mu{}_\nu$ is the principal square root of $g^{-1}f$ and the prime means ``derivative with respect to the argument."

The trace and the determinant of $S{}^\mu{}_\nu$ are,
\begin{align}
\label{eq:tracedet2}
\operatorname{Tr}\left(S\right)&=\dfrac{2\rho(r)}{r}+\rho'(r)+\tau'(t), \nonumber \\ \mydet{S}&=\dfrac{\rho^2(r)}{r^2}\rho'(r)\tau'(t),
\end{align}
so assuming $\tau'(t),\rho'(r)>0$, the square root is invertible and the pathologies described in Proposition 1 in \cite[Section 3.1]{PhysRevD.96.064003} will not appear.

The change of coordinates which will put $f_{\mu\nu}$ in the standard Minkowski form is,
\begin{align}
\label{eq:genericChangeCoord}
\dd \tau=\tau'(t)\dd t, \quad \dd \rho=\rho'(r)\dd r.
\end{align}

With the ansatz \eqref{eq:doubleMinkowski}, the Bianchi constraints in \eqref{eq:Bianchi} are satisfied for every $\tau (t)$ and $\rho(r)$. In vacuum, the field equations \eqref{eq:EoM} are reduced to,
\begin{align}
\label{eq:almostT}
\frac{m^4}{M_{g}^2}V_{g}{}^\mu{}_\nu & =0,\quad \frac{m^4}{M_{f}^2}V_{f}{}^\mu{}_\nu=0,
\end{align}
which is true only for zero tensor potentials $V_{g,f}{}^\mu{}_\nu$. However, we can couple two different generic stress--energy tensors $T_g{}^\mu{}_\nu$ and $T_f{}^\mu{}_\nu$ to the metrics (see, e.g., \cite{deRham:2014naa}),
\begin{align}
\label{eq:T}
\frac{m^4}{M_{g}^2}V_{g}{}^\mu{}_\nu & =\frac{1}{M_{g}^2} T_g{}^\mu{}_\nu,\quad \frac{m^4}{M_{f}^2}V_{f}{}^\mu{}_\nu=\frac{1}{M_{f}^2} T_f{}^\mu{}_\nu.
\end{align}
and solve \eqref{eq:T} for $T_g{}^\mu{}_\nu$ and $T_f{}^\mu{}_\nu$. Note that the Bianchi constraints automatically imply that the stress--energy tensors defined in this way are divergence-less. In addition, the stress--energy tensors always have an energy density $T_g{}^0{}_0(r)$ independent of time, and a time-dependent isotropic non-homogeneous pressure, $T_g{}^1{}_1(t,r)\neq T_g{}^2{}_2(t,r)=T_g{}^3{}_3(t,r)$.

The field equations are thus formally satisfied, and the two sectors share their isometry group, even if they have different KVFs. Indeed, \eqref{eq:pushforward} holds with the following Jacobian matrix,
\begin{align}
\label{eq:iJac}
\left(\varphi_*\right){}^\mu{}_\nu=\left(J^{-1}\right){}^\mu{}_\nu&=
	 \begin{pmatrix}
	   	\tau'(t)^{-1} & 0 & 0 & 0 \\
	   	0 & \rho'(r)^{-1} & 0 & 0 \\
	    0 & 0 & 1 & 0 \\
	   	0 & 0 & 0 & 1 \\
	  \end{pmatrix}\!.
\end{align}

Two things should be remarked concerning this solution. First, the two matter sources do not couple with the Einstein tensors of the metrics, but rather with the tensors potentials $V_{g}{}^\mu{}_\nu$ and $V_{f}{}^\mu{}_\nu$. This does not mean that we are decoupling $g_{\mu\nu}$ and $f_{\mu\nu}$, retaining two copies of GR, because the Bianchi constraint \eqref{eq:Bianchi} is an additional requirement not present in GR. Therefore we cannot choose any two metrics. Second, in this peculiar set up we can have matter gravitationally decoupled from us, since its contribution is exactly canceled by tensor potentials.

We can require the stress--energy tensors not to diverge at radial and time infinity or at finite values of $t$ and $r$. This can be achieved by setting, for example,
\begin{align}
\label{eq:taurho}
\tau(t)&=t+\operatorname{arcsinh}(t), \, t\in \mathbb{R} \nonumber \\
\rho(r)&=r+\operatorname{arcsinh}(r), \, r>0.
\end{align}
This is a valid diffeomorphism from the Minkowski spacetime into itself, and the resulting stress--energy tensors are always finite and satisfy,
\begin{align}
\label{eq:stresslimits}
{T_g{}^0}_0(r=0)&=\rho^g_0<\infty, \nonumber \\
{T_f{}^0}_0(r=0)&=\rho^f_0<\infty, \nonumber \\
\lim\limits_{r\rightarrow\infty}{T_g{}^0}_0(r)&=\rho^g_\infty<\infty, \nonumber \\
\lim\limits_{r\rightarrow\infty}{T_f{}^0}_0(r)&=\rho^f_\infty<\infty.
\end{align}
Analogous relations are valid for the other components of the stress--energy tensors, even if they depend on time.

The metrics $g_{\mu\nu}$ and $f_{\mu\nu}$ share the SO(3) KVFs, since the coordinate transformation between 
the sectors does not involve the angular coordinates [see \eqref{eq:iJac}]. 
As we will see in Proposition 2 in \autoref{subsec:isometriesHR}, since the Lie 
derivatives of $S{}^\mu{}_\nu$ with respect to the SO(3) KVFs 
are zero, the SO(3) KVFs are collineations for all the tensors in this 
solution. However, the Lie derivatives of $S{}^\mu{}_\nu,V_g{}^\mu{}_\nu,V_f{}^\mu{}_\nu$ with respect to all other 
KVFs of $g_{\mu\nu}$ and $f_{\mu\nu}$ are non-zero, so these KVFs are not 
collineations for the whole system.

\subsubsection*{Example III: Einstein solutions with algebraically decoupled parameters}

With this example we want to show that requiring a nonsingular geometry can constrain the KVFs of the metrics. As in the previous example, we consider the same isometry group, but different KVFs.

If we assume $g_{\mu\nu}$ to be an Einstein metric, then the field equations in the $g$-sector are just $V_g{}^\mu{}_\nu=0$. As explained in \cite{Kocic:2017wwf}, this equation determines the (eigenvalues of the) square root. There is a specific set of $\beta$-parameters for which $V_g{}^\mu{}_\nu=0$ is satisfied with some of the square root eigenvalues $\lambda_i$ left undetermined. We call this case ``algebraically decoupled". Note that, even if we assume $g_{\mu\nu}$ to be Einstein, $f_{\mu\nu}$ is not arbitrary. This is due to the fact that, in HR theory, $f=g\,S^2$ even if the metrics are dynamically decoupled. Hence, imposing $V_g{}^\mu{}_\nu=0$ already determines $f_{\mu\nu}$ (or equivalently $S^\mu{}_\nu$), and makes the theory not equivalent to two separate copies of GR.

One algebraically decoupled case is when the two metrics are bidiagonal (Type I in \cite{Hassan:2017ugh}) and
\begin{align}
(\beta_1 \beta_2 - \beta_0 \beta_3)^2 &= 4(\beta_1^2 - \beta_0 \beta_3)(\beta_2^2 - \beta_1 \beta_3), \nonumber\\
\beta_2^2-\beta_1 \beta_3 &\neq 0.
\end{align}
In this case, choosing the eigenvalues of the square root matrix to be
\begin{align}
	\lambda_1 = \lambda_2 = \lambda_3 = -\frac{\beta_1 \beta_2 - \beta_0 \beta_3}{2(\beta_2^2 - \beta_1 \beta_3)},
\end{align}
solves $V_g{}^\mu{}_\nu=0$ with the last eigenvalue, $\lambda_4(x)$, left unspecified. The equation $V_f{}^\mu{}_\nu=0$ becomes an equation in the $\beta$-parameters and can be solved for $\beta_4$.

Let $g_{\mu\nu}$ be a specific Einstein metric; for definiteness we let $g_{\mu\nu}$ be the Schwarzschild metric,
\begin{align}
\label{eq:Schw}
	g_{\mu\nu}=-F\mathrm{d}t^2+F^{-1}\mathrm{d}r^2 +r^2\mathrm{d}\Omega^2,
\end{align}
where $\mathrm{d}\Omega^2=\mathrm{d}\theta^2+\sin^2 \theta \mathrm{d}\phi^2$ and $F(r):=1-r_\mathrm{H}/r$, rendering the corresponding Einstein tensor to vanish. The only equation left to satisfy is $G_f{}^\mu{}_\nu=0$, which becomes a set of differential equations for $\lambda _4(x)$. A set of solutions can be found through a separation of variables ansatz. In particular, for such an ansatz, the arbitrary function $\lambda _4(x)=\lambda (\phi)$ solves the Einstein equation for $f_{\mu\nu}$. Hence the metric \eqref{eq:Schw} together with
\begin{align}
	f_{\mu\nu}=\lambda_1^2\left(-F\mathrm{d}t^2+F^{-1}\mathrm{d}r^2 +r^2\mathrm{d}\theta^2\right)+ \lambda^2(\phi)r^2\sin^2 \theta \mathrm{d}\phi^2
\end{align}
solve the bimetric field equations, with $\lambda(\phi)$ being an arbitrary function and with parameters specified above.
Assuming that $\lambda(\phi)>0$, we can make the change of coordinates,
\begin{subequations}
\begin{align}
\phi &\rightarrow \Phi, \quad \mathrm{d}\Phi = \lambda(\phi)\mathrm{d}\phi \\
t&\rightarrow T = \lambda_1 t  \\
r &\rightarrow R = \lambda_1 r  \\
\theta &\rightarrow \Theta=\theta
\end{align}
\end{subequations}
where $\lambda(\phi)=\Phi^\prime(\phi)$ and $\Phi(\phi)$ is a monotonic function (and thus invertible). In the new chart, $f_{\mu\nu}$ is manifestly Schwarzschild,
\begin{align}
f_{\mu\nu}	&=-\left(1-\frac{R_H}{R}\right)\mathrm{d}T^2+\frac{\mathrm{d}R^2}{1-\frac{R_H}{R}}  \nonumber \\
				&\quad \, +R^2\left(\mathrm{d}\Theta^2+ \sin^2 \Theta \mathrm{d}\Phi^2\right)
\end{align}
where we defined $R_H:=\lambda_1 r_h$. Thus, $f_{\mu\nu}$ also being a Schwarzschild metric, $g_{\mu\nu}$ and $f_{\mu\nu}$ exhibit the same isometry group and determine the same topology (see \autoref{subsec:HR}). The KVFs of $g_{\mu\nu}$ are 
\begin{subequations}
\label{eq:KVFg}
\begin{align}
	\eta_0 &= \partial_t, \quad \eta_1 = \partial_\phi,  \\
	\eta_2 &= \cos \phi \,\partial_\theta -\cot \theta \sin \phi \,\partial_\phi, \\
	\eta_3 &= - \sin \phi \,\partial_\theta - \cot \theta \cos \phi \,\partial_\phi
\end{align}
\end{subequations}
and the related KVFs of $f_{\mu\nu}$ are
\begin{subequations}
\label{eq:KVFf}
\begin{align}
	\xi_{0} & =\partial_{T}=\frac{1}{\lambda_{1}}\partial_{t},\quad\xi_{1}=\partial_{\Phi}=\frac{\lambda_{1}}{\Phi^{\prime}(\phi)}\partial_{\phi} \\
	\xi_{2} & =\cos(\Phi(\phi))\partial_{\theta}-\frac{\lambda_{1}}{\Phi^{\prime}(\phi)}\cot\theta\sin(\Phi(\phi))\partial_{\phi}, \\
	\xi_{3} & =-\sin(\Phi(\phi))\partial_{\theta}-\frac{\lambda_{1}}{\Phi^{\prime}(\phi)}\cot\theta\cos(\Phi(\phi))\partial_{\phi}
\end{align}
\end{subequations}
If $\lambda(\phi)$ is non-constant, only the KVF generating time translations is a KVF for both metrics.

A final check reveals that the scalar invariants of the square root $S=\mathrm{diag}(\lambda_1,\lambda_1,\lambda_1,\lambda_4(\phi))$ are all non-zero and finite, assuming that $\lambda_1,\lambda_4(\phi)>0$, i.e., the principal square root branch.

Contracting the KVFs \eqref{eq:KVFg} and \eqref{eq:KVFf} in different combinations with the metrics yields a set of scalar fields, which must be periodic in $\phi$ so that $\phi=0$ and $\phi=2\pi$ give the same value. Demanding this to be the case is equivalent to $\lambda(\phi)$ being periodic in $\phi$, i.e., $\lambda(\phi)=\lambda(\phi+2\pi)$, and $\Phi(\phi+2\pi)=\Phi(\phi)+2\pi n$ where $n$ is an integer. However, $\Phi(\phi)$ is a monotonic (increasing) function of $\phi$ so $n$ must be positive.  Conversely, we must demand that if we invert the relation between the azimuth coordinates to obtain $\phi=\phi(\Phi)$, the scalars must be periodic in $\Phi$ with period $2\pi$. This is equivalent to $\phi(\Phi=2\pi)=\phi(\Phi=0)+2\pi n'$ where $n'$ is a positive integer.

Note that we imposed $\Phi \in \left[0,2\pi \right)$ since $\Phi $ is the azimuthal coordinate in the spherical polar chart of $f_{\mu\nu}$. If $\Phi \in \left[ 0, \Phi_0\right)$ with $\Phi_0<2\pi$, we could identify the hypersurfaces $\Phi=0 $ and $\Phi =\Phi_0$ by demanding $\lambda$ to have a period of $\Phi_0$, but this would lead to a conical singularity in the $f$-sector, as explained in \cite[p.~214]{Wald:1984rg}. On the other hand, if $\Phi \in \left[ 0, \Phi_0\right)$ with $\Phi_0>2\pi$, we would cover twice the region between $2\pi<\Phi<\Phi_0$. Then, we could restrict $\Phi \in \left[0,2\pi \right)$, but $\Phi$ and $\phi$ both being monotonic would lead to $\phi \in \left[ 0, \phi_0\right)$ with $\phi_0<2\pi$, i.e., a conical singularity in the $g$-sector. Introducing the conical singularity does not change the topology (since $r=0$ is already not part of the spacetime); however, the solution would not be Schwarzschild.

As an example, consider the functions $\Phi(\phi) =\phi + \frac{1}{2}\sin \phi$ and $\lambda_4(\phi) = \Phi^\prime(\phi) = 1 + \frac{1}{2}\cos \phi$. The function $\lambda_4(\phi)$ is strictly positive and periodic in $\phi$ with period $2\pi$. Hence, $\Phi(\phi)\in \left[ 0,2 \pi \right)$ is a monotonically increasing function of $\phi$, satisfying the above conditions. This function determines different KVFs for the metrics, according to \eqref{eq:KVFf}.
Another example is when $\lambda(\phi)=c$ with $c$ being a positive constant. In this case, $\lambda(\phi)$ is clearly periodic in $\phi$ and $\Phi=c\phi$. Concerning the property $\Phi(2\pi)=\Phi(0)+2\pi n$, this demands that $c=n$ is a positive integer and $\phi(2\pi)=\phi(0)+2\pi n'$ implies that $1/c=n'$ is a positive integer. 
Hence $c=1$, as discussed in \cite{Kocic:2017wwf}. 

Note that this method of generating solutions did not rely on $g_{\mu\nu}$ being the Schwarzschild metric. The method generalizes straightforwardly to other Einstein metrics and other sets of $\beta$-parameters which have the property of leaving one or more of the square root eigenvalues $\lambda_i$ undetermined. 

\subsection{Collineations in the Hassan-Rosen bimetric theory}
\label{subsec:isometriesHR}

In this subsection we study general properties of spacetime symmetries in HR bimetric theory.

We start by reminding that the KVFs of a metric $g_{\mu\nu}$ are defined to be the solutions of the Killing equation,
\begin{align}
\label{eq:Kequation}
\mathscr{L}_{\eta}g_{\mu\nu}=\nabla_{(\mu} \eta _{\nu)} =0,
\end{align}
with $\mathscr{L}_\eta$ the Lie derivative along the vector field $\eta ^\mu$ and $\nabla _\mu$ the covariant derivative compatible with $g_{\mu\nu}$. The parentheses are understood as the symmetrisation of the indices they enclose.

The KVFs are completely specified by \cite{Wald:1984rg}
\begin{subequations}
\label{eq:KVFs}
\begin{align}
v^\rho\nabla_\rho\eta_\mu &=v^\rho L_{\rho\mu}, \label{eq:KVFsa}\\
v^\rho\nabla_\rho L_{\mu\nu} &=-v^\rho R_{\mu\nu\rho}{}^\sigma\eta_\sigma,\label{eq:KVFsb}
\end{align}
\end{subequations}
where $L_{\mu\nu}\coloneqq \nabla_\mu\eta_\nu$ is antisymmetric due to the Killing equation $L_{\mu\nu}=-L_{\nu\mu}$. The non-zero vector field $v^\mu$ specifies the integration path along which the differential 
system is solved. As shown in \cite{Wald:1984rg}, one can make use of \eqref{eq:KVFsb} to write
\begin{align}
\label{eq:GRLaplacian}
\dalamb\eta_\mu&=\nabla_\nu L{}^\nu{}_\mu=-R{}^\rho {}_{\mu\rho}{}^\nu\eta_\nu =-R{}_\mu{}^\nu\eta_\nu,
\end{align}
where $\dalamb \coloneqq \nabla_\nu \nabla^\nu$.
We can raise the index $\mu$ to obtain,
\begin{align}
\label{eq:HRLaplacianVectors}
\dalamb\eta^\mu=-R{}^\mu{}_\nu\eta^\nu .
\end{align}

\paragraph{Spacetime symmetries and field equations.}
A known result in GR is that a KVF is also a collineation for the Einstein tensor, the Riemann tensor and the stress-energy tensor \cite{hall2004symmetries},
\begin{align}
\mathscr{L}_{\eta }\,g{}_{\mu\nu}=0	&\Longrightarrow \mathscr{L}_{\eta }\,R{}_{\mu\nu\alpha}{}^\beta=0 \nonumber \\
									&\Longrightarrow \mathscr{L}_{\eta }\,G{}_{\mu\nu}=0 \nonumber \\
									&\Longrightarrow \mathscr{L}_{\eta }\,T{}_{\mu\nu}=0.
\end{align}
The first two results can be straightforwardly extended to the HR bimetric theory, whereas the third result can not since it uses the Einstein equations.

Consider the KVF $\eta^\mu$ of the metric $g_{\mu\nu}$ (analogous computations can be performed for the $f$-sector). The Lie derivative of the field equations \eqref{eq:EoMg} with respect to $\eta^\mu$ is,
\begin{align}
\label{eq:CollineationFE}
M_{g}^2\,\mathscr{L}_{\eta}\,G_g{}^\mu{}_\nu+m^4\,\mathscr{L}_{\eta}\,V_g{}{}^\mu{}_\nu=\mathscr{L}_{\eta}\,T_g{}^\mu{}_\nu.
\end{align}
We know that $\mathscr{L}_{\eta}\,G_g{}^\mu{}_\nu=0$, hence,
\begin{align}
\label{eq:CollineationFE2}
m^4\mathscr{L}_{\eta}\,V_g{}^\mu{}_\nu=\mathscr{L}_{\eta}\,T_g{}^\mu{}_\nu.
\end{align}
Therefore, in general, a KVF of one metric is neither a collineation of the respective stress--energy tensor nor a collineation of its tensor potential (we showed this explicitly in \autoref{subsec:explicit}). Also, this opens up the possibility of finding solutions having a given KVF for one metric but different collineations for both $V_g{}^\mu{}_\nu$ and $T_g{}^\mu{}_\nu$. As already mentioned, some solutions having this property were presented in \cite{PhysRevD.86.061502,Nersisyan:2015oha}. Therefore, we notice that the \emph{definition} of a spacetime symmetry in HR bimetric theory is non-trivial. Suppose having a static and spherically symmetric $g_{\mu\nu}$ and an axially symmetric $T_g{}^\mu{}_\nu$ and $V_g{}^\mu{}_\nu$. Whether this should be considered a spherically symmetric system or an axially symmetric one remains an open question. However, in ``bimetric vacuum" (see \cite{Kocic:2017hve} for a discussion about vacuum in HR bimetric theory), \eqref{eq:CollineationFE2} becomes,
\begin{align}
\label{eq:CollineationV}
\mathscr{L}_{\eta}\,V_g{}^\mu{}_\nu=0,
\end{align}
and the KVFs of the metric are also collineations for the corresponding tensor potential.

In the following we analyze some properties of the KVFs, useful for determining them in HR bimetric theory. First, \eqref{eq:KVFs} does not make use of the Einstein equations, so it holds for $g_{\mu\nu}$ and $f_{\mu\nu}$ separately.  However, the difference with GR is not only that we have separate differential systems for each metric. Inserting the bimetric field equations \eqref{eq:EoM2} in \eqref{eq:GRLaplacian}, we get,
\begin{subequations}
\begin{align}
\label{eq:HRLaplacian}
&-\dalamb\eta_\mu	=R^g{}_\mu{}^\nu\eta_\nu \nonumber \\
				&=\frac{1}{M_{g}^2}\left[m^4\left(\dfrac{1}{2}\,V_g\,\delta {}_\mu{}^\nu-V_{g}{}_\mu{}^\nu \right)+T_g{}_\mu{}^\nu-\dfrac{1}{2}T_g\delta{}_\mu{}^\nu\right]\eta_\nu, \\[2mm]
&-\tilde{\nabla}^2\xi_\mu	=R^f{}_\mu{}^\nu\xi_\nu \nonumber \\
						&=\frac{1}{M_{f}^2}\left[m^4\left(\dfrac{1}{2}\,V_f\,\delta {}_\mu{}^\nu-V_{f}{}_\mu{}^\nu \right)+T_f{}_\mu{}^\nu-\dfrac{1}{2}T_f\delta{}_\mu{}^\nu\right]\xi_d.
\end{align}
\end{subequations}
Therefore in vacuum, contrary to GR, the Laplacian of a KVF is not necessarily zero because of the contributions from the tensor potentials [see \eqref{eq:GRLaplacian}]. This reflects the more complicated structure of the theory, and makes the search for KVFs for vacuum solutions more difficult. However, for a generic vector field $V^\mu$ and independently of any field equations, the ``Komar identity" holds \cite{PhysRev.113.934,doi:10.1063/1.1705011}
\begin{equation}
\label{eq:Komar}
\nabla _\mu \nabla _\nu \nabla ^{[\nu}V^{\mu]}\equiv 0,
\end{equation}
which, for a KVF, becomes
\begin{equation}
\label{eq:DivLap}
\nabla_\mu \left( \dalamb \eta^\mu \right)\equiv 0,
\end{equation}
because $\nabla ^{[\nu}\eta^{\mu]}\coloneqq \left( \nabla^\nu\eta^\mu-\nabla^\mu\eta^\nu \right)/2=\nabla ^\nu \eta^{\mu}$ due to the Killing equation \eqref{eq:Kequation}.
One could solve \eqref{eq:DivLap} for KVFs in the same way one would use \eqref{eq:HRLaplacianVectors} in GR. An analogous relation holds for the $f$-sector. An alternative proof for \eqref{eq:Komar}, simpler than the one in \cite{doi:10.1063/1.1705011}, is provided in appendix \ref{app:alternativeproof}.

So far, we have not assumed anything regarding the KVFs. Now we assume that the two metrics share  a KVF $X^\mu$. Then, substituting \eqref{eq:EoM2}, \eqref{eq:potentials} and \eqref{eq:potentialsTrace} in \eqref{eq:HRLaplacian} and performing some algebra one has,
\begin{align}
\label{eq:SameKVF}
M_g^2\,\dalamb X^\mu+M_f^2\,\mydet{S}\tilde{\nabla}^2X^\mu&=-m^4\,V(S)X^\mu.
\end{align}
Fulfilling \eqref{eq:SameKVF} is then a necessary condition for $X^\mu$ to be a KVF for both metrics. It is an additional constraint to \eqref{eq:DivLap} when solving for shared KVFs.

\paragraph{Conserved currents determined by KVFs.}
Now consider the vector current $V_g^{\mu\nu}\eta_\nu$, where $\eta^ \mu$ is a KVF of $g_{\mu\nu}$. Its divergence is zero,
\begin{align}
\label{eq:current}
\nabla_\mu \left(V_g^{\mu\nu}\eta_\nu \right)=\eta_\nu\nabla_\mu V_g^{\mu\nu}+V_g^{\mu\nu}\nabla_\mu \eta_\nu =0,
\end{align}
and therefore it is a conserved current. The last equality follows from the Bianchi constraint for $V_g{}^{\mu\nu}$, the symmetry of $V_g{}^{\mu\nu}$ and the antisymmetry of $L_{\mu\nu}=\nabla_\mu \eta_\nu$ (the same result can be stated for the $f$-sector). This conservation law is the same as that for the stress--energy tensor
\begin{align}
\nabla_\mu \left(T_g^{\mu\nu}\eta_\nu \right)=0,
\end{align}
which is also true in GR.\footnote{Their validity is shown in the same way as for the current $V_g{}^{\mu\nu}\eta_{\nu}$.} Considering the $g$-sector, we can rewrite \eqref{eq:current} in the following way,
\begin{align}
\label{eq:conservation}
\partial _\mu \left( \sqrt{-g}V_g^{\mu\nu}\eta_{\nu} \right)=0.
\end{align}
Separating the time and spatial derivatives and integrating over a spacelike hypersurface $\mathcal{V}$ defined by $x^0=\mathrm{const.}$, we obtain,
\begin{align}
\label{eq:integral}
\partial _0 \int_\mathcal{V} \dd ^3x\left( \sqrt{-g}V_g^{0\nu}\eta_{\nu} \right)	&=-\int_\mathcal{V} \dd ^3x \, \partial _i \left( \sqrt{-g}V_g^{i\nu}\eta_{\nu} \right) \nonumber \\
&=-\int_{\partial \mathcal{V}} \dd \sigma_i \sqrt{-g}V_g^{i\nu}\eta_{\nu},
\end{align}
where $\dd\sigma_i$ is the two-surface element on $\partial \mathcal{V}$.
When $V_g ^{\mu\nu}$ tend to zero at spatial infinity, e.g., in asymptotically flat spacetimes with two Minkowski metrics, we can define the conserved charge,\footnote{In HR bimetric theory, asymptotically flat spacetimes do not necessarily have two Minkowski metrics with the same components \cite{Kocic:2017wwf}.}
\begin{align}
\label{eq:conservedscalar}
V^0 \coloneqq \int_\mathcal{V} \dd ^3x\left( \sqrt{-g}V_g^{0\nu}\eta_{\nu} \right), \quad \partial _0 V^0=0.
\end{align}
Regarding the meaning of $V^0$, we can consider an asymptotically flat static spacetime with a stress--energy tensor. Then, both $V^0$ and $P^0 \coloneqq \int_\mathcal{V} \dd ^3x\left( \sqrt{-g}\,T^{0\nu}\eta_{\nu} \right)$ would be separately conserved. Therefore the total energy of the system would be the sum of the two, i.e.
\begin{align}
\label{eq:energy}
E=P^0+V^0.
\end{align}
Hence, in this case $m^4\, V_g^{00}$ can be interpreted as the energy density due to the tensor potential.

\paragraph{Structure of spacetime symmetries.}
We now state several claims which clarify the behavior of the isometries and can help to define a symmetric spacetime in the HR bimetric theory.

We start by stating the following (in matrix notation)
\begin{lem*}
  Let $\xi$ be a vector field, $S$ a \emph{(1,1)}-tensor field and 
  $F$ a matrix-valued function. If the Lie
  derivative of $S$ with respect to $\xi$ vanishes, then the Lie derivative
  of $F(S)$ vanishes too; that is, $\mathscr{L}_{\xi}S=0$ implies
  $\mathscr{L}_{\xi}F(S)=0$.
\end{lem*}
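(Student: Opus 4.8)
The plan is to reduce the statement to the behaviour of $\mathscr{L}_{\xi}$ under the matrix (composition) product of $(1,1)$-tensors, establish it first for powers and polynomials, and then extend it to an arbitrary matrix function, either through a power-series argument or, more robustly, through the flow generated by $\xi$.

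First I would record the two elementary ingredients. Since $\mathscr{L}_{\xi}$ is a derivation on the tensor algebra and commutes with contractions, and the composition $(ST)^{\mu}{}_{\nu}=S^{\mu}{}_{\rho}\,T^{\rho}{}_{\nu}$ is merely a contraction of $S\otimes T$, one obtains the Leibniz rule $\mathscr{L}_{\xi}(ST)=(\mathscr{L}_{\xi}S)\,T+S\,(\mathscr{L}_{\xi}T)$. In addition, the identity tensor satisfies $\mathscr{L}_{\xi}\delta^{\mu}{}_{\nu}=0$, which is a one-line index computation. From the hypothesis $\mathscr{L}_{\xi}S=0$ an induction on $n$ then gives $\mathscr{L}_{\xi}S^{n}=0$ for every $n\geq 0$: the cases $n=0,1$ are immediate, and the inductive step reads $\mathscr{L}_{\xi}S^{n+1}=(\mathscr{L}_{\xi}S)\,S^{n}+S\,(\mathscr{L}_{\xi}S^{n})=0$. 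By linearity of $\mathscr{L}_{\xi}$ this already settles the case in which $F$ is a polynomial, and, by term-by-term passage to the limit, the case in which $F$ is given by a convergent power series.

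To cover a genuinely arbitrary matrix-valued $F$---the step I expect to be the main obstacle---I would pass to the local flow $\phi_{t}$ generated by $\xi$. The hypothesis $\mathscr{L}_{\xi}S=0$ is equivalent to the flow-invariance $\phi_{t}^{*}S=S$. The crucial observation is that the pullback of a $(1,1)$-tensor acts pointwise as conjugation by the differential of $\phi_{t}$, so $\phi_{t}^{*}$ is an algebra homomorphism for the composition product and fixes the identity; consequently it commutes with every primary matrix function, $\phi_{t}^{*}F(S)=F(\phi_{t}^{*}S)$, precisely because such functions are equivariant under similarity, $F(AMA^{-1})=A\,F(M)\,A^{-1}$. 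Combining these gives $\phi_{t}^{*}F(S)=F(\phi_{t}^{*}S)=F(S)$, i.e.\ $F(S)$ is itself flow-invariant, whence $\mathscr{L}_{\xi}F(S)=\frac{\dd}{\dd t}\big|_{t=0}\phi_{t}^{*}F(S)=0$.

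The delicate point, which I would argue most carefully, is the similarity-equivariance invoked above: it must be justified for whatever class of functions the lemma is intended to cover (power series, holomorphic functional calculus, or primary matrix functions defined through the eigenvalues and spectral projections of $S$). For the eigenvalue definition one also checks that the eigenvalues are constant along the flow, so that no extra term from differentiating them survives; this is automatic here, since the scalar invariants of $S$ (the traces of its powers) inherit the flow-invariance of $S$. Once equivariance is in place the conclusion is immediate, and the flow argument has the advantage of not requiring $F$ to be analytic.
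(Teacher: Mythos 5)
Your proof is correct, but it follows a genuinely different route from the paper's. The paper's argument is purely local and coordinate-based: near a point where $\xi\neq 0$ it straightens the field, choosing a chart in which $\xi^\mu=\delta^\mu_0$, so that $\mathscr{L}_{\xi}S^\mu{}_\nu=\partial_0 S^\mu{}_\nu$; the hypothesis then says the components of $S$ are independent of $x^0$, hence so are the components of $F(S)$, which are pointwise functions of them, and $\mathscr{L}_{\xi}F(S)=\partial_0 F(S)=0$. Your argument --- the Leibniz-rule induction for polynomials and power series, superseded by the flow-invariance $\phi_t^{*}S=S$ combined with the similarity equivariance $F(AMA^{-1})=A\,F(M)\,A^{-1}$ --- is the coordinate-free counterpart of the same underlying idea (invariance along the flow of $\xi$), but executed differently. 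What your version buys: it avoids the straightening theorem, so it also works at zeros of $\xi$, where the paper's adapted chart does not exist and one would have to argue by continuity; it makes explicit the equivariance hypothesis on $F$ that the paper leaves implicit (this is in fact exactly the condition needed for $F(S)$ to be a well-defined tensor field at all, and it does hold for the bimetric potentials $V_{g,f}$, since the scalar invariants $e_k(S)$ are similarity-invariant); and it covers non-analytic $F$. What the paper's version buys is brevity: the entire proof is the observation that a Lie derivative along a straightened field is a partial derivative. Note finally that your first, power-series step is logically redundant --- the flow argument alone proves the lemma for every admissible $F$ --- so it could be dropped or kept only as motivation.
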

\begin{proof}
  Locally, one can always find a chart where a coordinate $x^0$ is aligned
  along $\xi^\mu$ so that $\xi^a = \delta^a_0$,
\begin{align}
  \mathscr{L}_\xi S^\mu{}_\nu &=  \xi^\rho \partial_\rho S^\mu{}_\nu
  - (\partial_\rho \xi^\mu) S^\rho{}_\nu 
  + (\partial_\nu \xi^\rho) S^\mu{}_\rho \nonumber\\
  & = \partial_0 S^\mu{}_\nu.
\end{align}  
  If $\mathscr{L}_\xi S = 0$, 
  then $S$ is not 
  dependent on $x^0$, since $\partial_0 S(x) =0$. Consequently $F(S)(x)$ does not depend on $x^0$,
  and $\partial_0 F(S)(x)=0$.
\end{proof}
\noindent Thanks to this Lemma, we know that
\begin{align}
\mathscr{L}_\xi \, S{}^\mu{}_\nu =0 \Longrightarrow \mathscr{L}_\xi \, V_{g,f}{}^\mu{}_\nu =0.
\end{align}
\begin{prop}
\label{prop:2}
  If $F$ is invertible then, $\mathscr{L}_{\xi}S=0$ if and only if
  $\mathscr{L}_{\xi}F(S)=0$.
\end{prop}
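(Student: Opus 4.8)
The plan is to observe that one implication is already in hand and to obtain the other by a symmetry trick, with invertibility doing exactly one job. The forward direction, $\mathscr{L}_{\xi}S=0 \Longrightarrow \mathscr{L}_{\xi}F(S)=0$, is precisely the content of the preceding Lemma and needs no invertibility hypothesis whatsoever. Hence the only genuinely new work is the converse, $\mathscr{L}_{\xi}F(S)=0 \Longrightarrow \mathscr{L}_{\xi}S=0$, and this is the one place where the assumption that $F$ is invertible is used.

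For the converse I would exploit that, when $F$ is invertible, its inverse $F^{-1}$ is again a matrix-valued function of the same type to which the Lemma applies. Setting $T\coloneqq F(S)$, which is again a $(1,1)$-tensor field, invertibility gives $S=F^{-1}(T)$. I would then apply the Lemma a \emph{second} time, now to the matrix-valued function $F^{-1}$ and the tensor $T$: since $\mathscr{L}_{\xi}T=\mathscr{L}_{\xi}F(S)=0$ by assumption, the Lemma yields $\mathscr{L}_{\xi}F^{-1}(T)=0$, that is, $\mathscr{L}_{\xi}S=0$. Combining the two implications gives the stated equivalence in a single line.

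Alternatively, one can bypass the inverse function and argue directly in the adapted chart already used to prove the Lemma, where $\xi^{a}=\delta^{a}_{0}$ and the Lie derivative reduces to $\partial_{0}$. The hypothesis $\mathscr{L}_{\xi}F(S)=0$ then reads $\partial_{0}F(S)=0$, so $F(S)$ is independent of $x^{0}$; invertibility lets one recover $S=F^{-1}(F(S))$, which is therefore also independent of $x^{0}$, giving $\partial_{0}S=\mathscr{L}_{\xi}S=0$. This route makes the role of invertibility fully transparent at the level of coordinates and avoids any appeal to smoothness of $F^{-1}$ as an abstract map.

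The step I expect to be the only real obstacle is the qualification of $F^{-1}$ as an admissible matrix-valued function for the Lemma. Concretely, one must be sure that ``invertible'' is meant in the functional sense---a bijection on the relevant set of matrices, with $F^{-1}$ regular enough that $F^{-1}(T)$ is again a well-defined, differentiable $(1,1)$-tensor field---rather than merely pointwise invertibility of the matrix $F(S)$. Once this is pinned down, both arguments go through verbatim, since everything reduces to the single observation already encoded in the Lemma: a matrix-valued function of an $x^{0}$-independent tensor is itself $x^{0}$-independent.
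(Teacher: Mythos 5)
Your proposal is correct and is exactly the paper's argument: the paper's entire proof reads ``Since $F$ is invertible, we can use the previous Lemma in both directions,'' which is precisely your application of the Lemma to $F^{-1}$ and $T=F(S)$ for the converse. Your added remarks---the coordinate-chart rerun and the caveat that ``invertible'' must mean functional invertibility with $F^{-1}$ an admissible matrix function---merely make explicit what the paper leaves implicit.
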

\begin{proof}
  Since $F$ is invertible, we can use the previous Lemma in both directions.
\end{proof}
\noindent Proposition 1 does not apply to our case, because the tensor potentials, containing traces of $S{}^\mu{}_\nu$, are not invertible functions of it. Therefore
\begin{align}
\mathscr{L}_\xi \, V_{g/f}{}^\mu{}_\nu =0 \centernot\Longrightarrow \mathscr{L}_\xi \, S^\mu{}_\nu =0.
\end{align}
Therefore even in vacuum, as we saw in \autoref{subsec:explicit}, we can have different KVFs in the two sectors. In such a case, the isometry of $g_{\mu\nu}$ are spacetime symmetries for tensors in the $g$-sector and the isometries of $f_{\mu\nu}$ will be spacetime symmetries for tensors in the $f$-sector.

If $g_{\mu\nu}$ and $S{}^\mu{}_\nu$ share their collineations, $f_{\mu\nu}$ shares them too. This is a special case of a more general result we are going to introduce after the following definitions.
\begin{defn}
  Consider metric fields $g$ and $f$. Let $g^{-1}f$ be positive definite.
  We define the one-parameter set of Lorentzian metrics $\Gamma =\{h_{\alpha}=g(g^{-1}f)^{\alpha}$, $\alpha\in \mathbb{R}\}$,
  where the matrix power function is defined through 
  $X^{\alpha}=\exp\left(\alpha\log X\right)$.
  Notice that $g=h_{0}$ and $f=h_{1}$.\footnote{The geometric mean of two symmetric matrices $A$ and $B$ is defined
  	by $A\operatorname{\#}B=A(A^{-1}B)^{1/2}$. For any two $h_{\alpha}$
  	and $h_{\beta}$ we have 
  	$h_{\alpha}\operatorname{\#}h_{\beta}=h_{(\alpha+\beta)/2}$. 
  	See \cite{Hassan:2017ugh} for more details.}
\end{defn}
\begin{defn}
  If the vector field $\xi^\mu$ is a KVFs for $g_{\mu\nu}$ and $f_{\mu\nu}$, we define it as a ``bimetric isometry". The definition extends to any type of spacetime symmetries (e.g., conformal vector fields).\footnote{According to this definition, the conformal vector fields found in \cite{Kocic:2017hve} are bimetric.}
\end{defn}
\begin{defn}
  If $\xi^\mu$ is a spacetime symmetry for all the tensors in one metric sector, then we refer to it as a ``sectoral spacetime symmetry".
\end{defn}
\begin{defn}
  If $\xi^\mu$ is a collineation for some tensors, we call it a ``narrow collineation". The definition extends to any spacetime symmetry, not only collineations.
\end{defn}
\noindent We now state
\begin{prop}
\label{prop:means}
  Consider a vector field $\xi$ and two arbitrary $h_{\alpha}$ and $h_{\beta}$
  such that $\mathscr{L}_{\xi}h_{\alpha}=0$ and $\mathscr{L}_{\xi}h_{\beta}=0$.
  Then $\mathscr{L}_{\xi}h_{\gamma}=0$ for any $\gamma \in \mathbb{R}$.
\end{prop}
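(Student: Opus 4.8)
The plan is to reduce the statement to the Lemma and to Proposition~\ref{prop:2} already established above, by exploiting the fact that any two members of $\Gamma$ differ by a single power of $g^{-1}f$. First I would dispose of the degenerate case $\alpha=\beta$, where the two hypotheses collapse to one and there is nothing to leverage, and assume $\alpha\neq\beta$. The key algebraic observation is that the relative endomorphism built from $h_\alpha$ and $h_\beta$ is a power of $g^{-1}f$: using $h_\beta^{-1}=(g^{-1}f)^{-\beta}g^{-1}$ and $h_\alpha=g(g^{-1}f)^{\alpha}$, the factor $g^{-1}g$ cancels and
\[
(h_\beta^{-1}h_\alpha){}^\mu{}_\nu=\bigl((g^{-1}f)^{-\beta}(g^{-1}f)^{\alpha}\bigr){}^\mu{}_\nu=\bigl((g^{-1}f)^{\alpha-\beta}\bigr){}^\mu{}_\nu,
\]
which is well defined because $g^{-1}f$ is positive definite by the Definition, so real powers exist via $X^{s}=\exp(s\log X)$.

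Next I would transfer the two Killing hypotheses to this endomorphism. Since the Lie derivative is a derivation with respect to tensor products and contractions, differentiating the identity $h_\beta^{-1}h_\beta=\mathrm{id}$ shows that $\mathscr{L}_{\xi}h_\beta=0$ forces $\mathscr{L}_{\xi}h_\beta^{-1}=0$, whence
\[
\mathscr{L}_{\xi}(g^{-1}f)^{\alpha-\beta}=\mathscr{L}_{\xi}\!\left(h_\beta^{-1}h_\alpha\right)=(\mathscr{L}_{\xi}h_\beta^{-1})\,h_\alpha+h_\beta^{-1}\,(\mathscr{L}_{\xi}h_\alpha)=0.
\]
Because $\alpha\neq\beta$, the matrix power $X\mapsto X^{\alpha-\beta}$ is invertible on positive-definite tensors, its inverse being $Y\mapsto Y^{1/(\alpha-\beta)}$, so Proposition~\ref{prop:2} applies with $F(\cdot)=(\cdot)^{\alpha-\beta}$ and yields $\mathscr{L}_{\xi}(g^{-1}f)=0$.

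Once $\mathscr{L}_{\xi}(g^{-1}f)=0$ is in hand, the Lemma immediately gives $\mathscr{L}_{\xi}(g^{-1}f)^{\gamma}=0$ for every $\gamma\in\mathbb{R}$, since $X\mapsto X^{\gamma}$ is a matrix-valued function of $g^{-1}f$. Moreover, writing $g=h_\alpha(g^{-1}f)^{-\alpha}$ and applying the product rule together with $\mathscr{L}_{\xi}h_\alpha=0$ gives $\mathscr{L}_{\xi}g=0$. Combining these two facts in $h_\gamma=g(g^{-1}f)^{\gamma}$ yields $\mathscr{L}_{\xi}h_\gamma=0$ for all $\gamma$, which is the claim. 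The only delicate points to verify carefully are the algebraic identity $h_\beta^{-1}h_\alpha=(g^{-1}f)^{\alpha-\beta}$---for which the positive-definiteness of $g^{-1}f$ assumed in the Definition is essential, both to define the real powers and to guarantee invertibility of the power map so that Proposition~\ref{prop:2} can be invoked---and the bookkeeping of index positions ensuring that $h_\beta^{-1}h_\alpha$ is a genuine $(1,1)$-tensor, so that the Lemma and Proposition~\ref{prop:2} apply verbatim.
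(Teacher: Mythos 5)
Your proof is correct and follows essentially the same route as the paper's: both extract a vanishing Lie derivative of the power $(g^{-1}f)^{\alpha-\beta}$ from the two hypotheses, invoke Proposition~\ref{prop:2} with the invertible power map on positive-definite tensors to get $\mathscr{L}_{\xi}(g^{-1}f)=0$, apply the Lemma to obtain all powers, recover $\mathscr{L}_{\xi}g=0$, and conclude by the Leibniz rule. The only cosmetic difference is in the intermediate step: you Lie-differentiate the identity $h_{\beta}^{-1}h_{\alpha}=(g^{-1}f)^{\alpha-\beta}$ directly, whereas the paper reaches the equivalent statement $h_{\alpha}\,\mathscr{L}_{\xi}\left((g^{-1}f)^{\beta-\alpha}\right)=0$ by substituting $(\mathscr{L}_{\xi}g)(g^{-1}f)^{\alpha}=-g\,\mathscr{L}_{\xi}(g^{-1}f)^{\alpha}$ into the expansion of $\mathscr{L}_{\xi}h_{\beta}$.
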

\begin{proof}
We define $\mathcal{A}=g^{-1}f$ for readability.

Our hypothesis is
\begin{align}
\label{eq:hyp}
\mathscr{L}_\xi h_\alpha =\mathscr{L}_\xi h_\beta= 0.
\end{align}
Suppose $\alpha < \beta$ (otherwise switch them). We have
\begin{align}
\label{eq:step1}
0=\mathscr{L}_\xi h_\alpha = \left(\mathscr{L}_\xi \,g \right) \mathcal{A}^\alpha+ g\, \mathscr{L}_\xi\mathcal{A}^\alpha,
\end{align}
which is equivalent to
\begin{align}
\label{eq:step2}
\left(\mathscr{L}_\xi \,g \right) \mathcal{A}^\alpha=- g\, \mathscr{L}_\xi \mathcal{A}^\alpha.
\end{align}
From \eqref{eq:hyp} and \eqref{eq:step2} it follows
\begin{align}
\label{eq:step3}
0=\mathscr{L}_\xi h_\beta 	&= \left(\mathscr{L}_\xi \,g \right) \mathcal{A}^\beta+ g\, \mathscr{L}_\xi\mathcal{A}^\beta \nonumber \\
							&=\left(\mathscr{L}_\xi \,g \right) \mathcal{A}^\alpha\mathcal{A}^{\beta-\alpha} +g\, \mathscr{L}_\xi\left(\mathcal{A}^\alpha\mathcal{A}^{\beta-\alpha}\right) \nonumber \\
							&=- g\, \left(\mathscr{L}_\xi\mathcal{A}^\alpha\right) \mathcal{A}^{\beta-\alpha} +g\, \left(\mathscr{L}_\xi\mathcal{A}^\alpha\right)\mathcal{A}^{\beta-\alpha} \nonumber \\
							&\quad +g\, \mathcal{A}^\alpha \, \mathscr{L}_\xi\left(\mathcal{A}^{\beta-\alpha}\right) \nonumber \\
							&=h_\alpha \, \mathscr{L}_\xi\left(\mathcal{A}^{\beta-\alpha}\right).
\end{align}
The matrix power function is invertible, so \eqref{eq:step3} and Proposition \ref{prop:2} imply
\begin{align}
\label{eq:step4}
\mathscr{L}_\xi\left(\mathcal{A}^{\beta-\alpha}\right)=0\Longrightarrow \mathscr{L}_\xi\mathcal{A}=0,
\end{align}
which, thanks to the Lemma, implies
\begin{align}
\label{eq:step5}
\mathscr{L}_\xi\mathcal{A}=0\Longrightarrow \mathscr{L}_\xi \left(\mathcal{A}^\gamma\right)=0, \quad \forall \gamma \in \mathbb{R} .
\end{align}
Then, \eqref{eq:step2} tells us
\begin{align}
\label{eq:step6}
\left(\mathscr{L}_\xi \,g \right) \mathcal{A}^\alpha=0 \Longrightarrow \mathscr{L}_\xi \,g=0.
\end{align}
Therefore,
\begin{align}
\label{eq:thesis}
\mathscr{L}_\xi h_\gamma = \left(\mathscr{L}_\xi \,g \right) \mathcal{A}^\gamma+ g\, \mathscr{L}_\xi\left(\mathcal{A}^\gamma\right)=0,
\end{align}
$\forall \gamma \in \mathbb{R}$.
\end{proof}

\begin{cor*}
\label{prop:corollary}
  An isometry is bimetric if, and only if, it is an isometry for any two metrics in the set $\Gamma$.
\end{cor*}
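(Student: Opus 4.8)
The plan is to reduce both directions of the biconditional to a direct application of Proposition~\ref{prop:means}, using the single structural observation that $g=h_0$ and $f=h_1$ are themselves members of the set $\Gamma$. Once this is noticed, the corollary is essentially a restatement of Proposition~\ref{prop:means} in the language of the definition of a bimetric isometry, and no new computation is required.

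For the forward implication, I would assume $\xi$ is a bimetric isometry. By the definition of a bimetric isometry this means $\mathscr{L}_\xi g=0$ and $\mathscr{L}_\xi f=0$, which I rewrite as $\mathscr{L}_\xi h_0=0$ and $\mathscr{L}_\xi h_1=0$. Applying Proposition~\ref{prop:means} with $\alpha=0$ and $\beta=1$ then yields $\mathscr{L}_\xi h_\gamma=0$ for every $\gamma\in\mathbb{R}$, so in particular $\xi$ is an isometry of any pair of metrics drawn from $\Gamma$.

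For the converse, I would assume $\xi$ is an isometry for two distinct metrics $h_\alpha$ and $h_\beta$ of $\Gamma$. Proposition~\ref{prop:means} again gives $\mathscr{L}_\xi h_\gamma=0$ for all $\gamma$, and specializing to $\gamma=0$ and $\gamma=1$ returns $\mathscr{L}_\xi g=0$ and $\mathscr{L}_\xi f=0$, i.e.\ $\xi$ is bimetric. The only point worth flagging is interpretational rather than technical: ``any two metrics'' must be read with the two parameters \emph{distinct}, since the hypothesis $\alpha\neq\beta$ is what makes the step isolating $\mathscr{L}_\xi\bigl(\mathcal{A}^{\beta-\alpha}\bigr)$ nonvacuous in Proposition~\ref{prop:means}. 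With that understood, there is no genuine obstacle here — all the substantive work (the interplay of the Lie derivative with the matrix power $\mathcal{A}^\gamma$, and the invertibility of the power function needed to strip the exponent via Proposition~\ref{prop:2}) has already been carried out in the proof of Proposition~\ref{prop:means}, and the corollary follows by mere specialization.
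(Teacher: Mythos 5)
Your proposal is correct and matches the paper's intent exactly: the paper states the corollary without a separate proof precisely because it is, as you say, an immediate specialization of Proposition \ref{prop:means} using $g=h_0$ and $f=h_1\in\Gamma$, applied in both directions. Your remark that the two parameters must be distinct (otherwise the step isolating $\mathscr{L}_{\xi}\bigl(\mathcal{A}^{\beta-\alpha}\bigr)$ is vacuous) is a sound reading of the implicit hypothesis in the proposition.
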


Proposition \ref{prop:means} tells us that, if any two of the $h_\alpha $ share the KVFs, then every tensor field in the theory will be invariant under the corresponding transformation. The corollary tells us that this is a bimetric spacetime symmetry, and the whole spacetime is invariant under the transformation. In the opposite case, instead, the definition of spacetime symmetry needs more care. Note that, when stress--energy tensors are present, tensor fields within the same sector can have different collineations [compare with \eqref{eq:CollineationFE2} in which the collineations of $V_g{}^\mu{}_\nu$ and $T_g{}^\mu{}_\nu$ are not necessarily isometries of the metric $g_{\mu\nu}$; they are narrow symmetries]. However in vacuum, thanks to \eqref{eq:CollineationV}, tensors within the same sector share the collineations, which are then sectoral.

\subsection{Brief summary of \autoref{sec:HR}}
\label{subsec:summaryI}

Here we summarize the main results of this section.
\begin{enumerate}
	\item We presented example solutions of the bimetric field equations having different KVFs, different isometry groups or both; other methods for finding ansatzes with these properties are described in Appendix \ref{app:ansatz}.
	\item In HR bimetric theory, isometries arise in three configurations:
	\begin{enumerate}
	\item The two metrics share the KVFs, and the latter define symmetries for all other tensors.
	\item The two metrics have different KVFs.
	\begin{enumerate}
		\item In vacuum, tensors in the $g$-sector share the symmetries with $g_{\mu\nu}$, and those in the $f$-sector share them with $f_{\mu\nu}$.
		\item In the presence of stress--energy tensors, tensors in the $g$-sector need not share the symmetries with $g_{\mu\nu}$, and those in the $f$-sector need not share them with $f_{\mu\nu}$.
	\end{enumerate}
\end{enumerate}
	\item Case (a) can certainly be considered as an ``authentic" spacetime symmetry in bimetric relativity, i.e., a spacetime symmetry under which all the tensors in the theory are invariant. This is always the case in GR, since an isometry of \emph{the} metric is a collineation for the Riemann, Ricci, Einstein and stress--energy tensor. However, we also highlight the other possibilities in bimetric relativity. For example, consider two stationary metrics with different timelike KVFs in vacuum; in this case, it would be meaningful to consider this as a stationary spacetime, even if the KVFs are different. Hence, in general, one may define an authentic spacetime symmetry in HR bimetric theory referring to the minimal symmetry group shared by the metrics, and not to the associated symmetry vector fields.
	\item We proved that the two metrics share their KVFs if, and only if, any two of the metrics in the set $\Gamma=\left\lbrace h_\alpha = g\left(g^{-1}f \right)^\alpha : {\alpha \in \mathbb{R}}\right\rbrace$ have the same KVF.
	\item If $\eta^\mu$ and $\xi^\mu$ are KVFs of $g_{\mu\nu}$ and $f_{\mu\nu}$, respectively,
	then the following vector currents are covariantly conserved,
	$$ 
	   \nabla_\mu \left( V_g^{\mu\nu}\eta_\mu \right)\equiv 0, \quad
	   \tilde\nabla_\mu \left( V_f^{\mu\nu}\xi_\nu \right)\equiv 0,
	$$
	where $V_g^{\mu\nu}$ and $V_f^{\mu\nu}$ are the contributions from the bimetric potential to the field equations [see \eqref{eq:EoM}].
	This is analogous to the current determined by a stress--energy tensor,
	$$ 
       \nabla_\mu \left( T_g^{\mu\nu}\eta_\mu \right)\equiv 0,\quad
       \tilde\nabla_\mu \left( T_f^{\mu\nu}\xi_\nu \right)\equiv 0.
    $$
	In asymptotically flat spacetime, this leads to defining an energy density contribution from the tensor potentials $V_g^{\mu\nu}$ and $V_f^{\mu\nu}$, see \eqref{eq:conservedscalar}--\eqref{eq:energy}.
\end{enumerate}

\section{Geometric background and general analysis}
\label{sec:geometry}

In this section, we study spacetime symmetries on a differentiable manifold endowed with two generic metrics, without imposing any field equations. We only make geometrical considerations and generalize the analysis introduced in the first section. A short non-technical summary of the results is provided below, for the convenience of the reader not interested in the technicalities.

We want to understand if, starting with one metric having some isometries, it is possible to constrain the isometries of the second metric. The results are:

\begin{enumerate}
	\item Consider having two metrics on the same spacetime, and suppose one metric has some symmetries. Then, the second metric need not to share the symmetries with the first. Relations between the KVFs of the two metrics can always be found, but without additional requirements regarding the symmetries, these relations allow for any possible symmetry configuration.
	\item One possible requirement is that the metrics have KVFs with the same character (timelike, spacelike or null). For instance, a sufficient condition for the metrics $g_{\mu\nu}$ and $f_{\mu\nu}$ to have timelike KVFs $\xi^\mu$ and $\eta^\mu$ (i.e., to be stationary) is that they satisfy (in matrix notation) $g=P^\tr\,f\,P$ for some invertible $P$, and $\eta = P\, \xi$. The same is true if we require both metrics to be axially symmetric, and in general if we require KVFs with the same character.
	\item Another requirement is that the metrics share some isometries. This allows us to relate the KVFs of the metrics. They need not to be the same, but they must satisfy the same commutation relations; e.g., if we require spherical symmetry for both metrics, then the KVFs of both metrics must satisfy the angular momentum commutation relations
	\begin{align}
			\left[ J_i,J_j \right]=\epsilon ^{ijk}J_k.
		\end{align}
\end{enumerate}
We also add here the topological constraint of \autoref{subsec:HR}, because it is as general as the previous statements.
\begin{enumerate}[resume]
	\item A spacetime can only have one topology, and the two metrics must be compatible with it. This restricts the possible metric combinations, independently on the symmetries.
\end{enumerate}
These outcomes can be applied to any modified theory of gravity relying on (pseudo-)Riemannian geometry.

The rest of the section is quite technical; a discussion of the results can be found in the conclusions.

\subsection{The relation between torsion-free covariant derivatives}
\label{subsec:CovD}

On a differentiable manifold, given any two generic torsion-free covariant derivative operators $\nabla _\mu$ and $\tilde{\nabla}_\mu$, we can define the tensor ${C^{\alpha}}_{\mu\nu}$, symmetric in $\mu$ and $\nu$, such that:
\begin{equation}
\label{eq:DifferenceCovD}
\nabla _\mu \omega_\nu=\tilde{\nabla}_\mu\omega _\nu-C{}^{\alpha}{}_\mu{}_\nu\omega_\alpha,
\end{equation}
with $\omega _\alpha$ any 1-form defined on the cotangent space of the manifold 
\cite{Wald:1984rg}. Knowing \eqref{eq:DifferenceCovD}, one can 
straightforwardly deduce how the covariant derivatives of a tensor of any rank 
relate, and how the Riemann and Ricci tensors determined by the two covariant 
derivatives relate,
\begin{align}
\label{eq:DifferenceRiemann}
\tilde{R}{}_{\mu\nu\rho}{}^\sigma&=R{}_{\mu\nu\rho}{}^\sigma-2{C^{\sigma}}_{[\mu|\alpha}{C^{\alpha}}_{|\nu]\rho}+2\nabla_{[\mu|}{C^{\sigma}}_{|\nu]\rho}, \nonumber \\
\tilde{R}_{\mu\nu}&=R_{\mu\nu}-2{C^{\alpha}}_{[\mu|\beta}{C^{\beta}}_{|\alpha]\nu}+2\nabla_{[\mu|}{C^{\alpha}}_{|\alpha]\nu},
\end{align}
where $[\mu|...|\nu]$ denotes the antisymmetrization of $\mu$ and $\nu$ only. We notice that, in a coordinate basis, we can write
\begin{equation}
\label{eq:DifferenceChr}
C{}^{\alpha}{}_\mu{}_\nu=\Gamma{}^{\alpha}{}_\mu{}_\nu-\tilde{\Gamma}{}^{\alpha}{}_\mu{}_\nu,
\end{equation}
where the $\Gamma$'s are the Christoffel symbols of the covariant derivatives $\nabla _\mu$ and $\tilde{\nabla}_\mu$, respectively. If the two covariant derivative operators are compatible with two metrics $g_{\mu\nu}$ and $f_{\mu\nu}$, \eqref{eq:DifferenceChr} can be rewritten as,
\begin{align}
\label{eq:Cexplicit}
C{}^{\alpha}{}_\mu{}_\nu& =\dfrac{1}{2}g^{\alpha\beta}\left( \tilde{\nabla}_{\mu}g_{\beta\nu}+\tilde{\nabla}_{\nu}g_{\mu\beta}-\tilde{\nabla}_{\beta}g_{\mu\nu} \right) \nonumber \\
			& =-\dfrac{1}{2}f^{\alpha\beta}\left(\nabla_{\mu}f_{\beta\nu}+\nabla_{\nu}f_{\mu\beta}-\nabla_{\beta}f_{\mu\nu} \right).
\end{align}

\subsection{General properties of KVFs and the $A$-relation}
\label{subsec:Killing}

Consider an open set of a differentiable manifold. Suppose $\eta ^\mu$ and $\xi ^\mu$ are two vector fields defined on the spacetime and non-vanishing in the open set. Then, suppose we can find a linear local map between them having the following form,\footnote{The assumptions of non-vanishingness of the vector fields is not strictly necessary. We could allow for the vector fields to vanish the same number of times in the open set. However, if they vanish a different number of times, the linear local map between them will not exist (a linear map must send 0 to 0).}
\begin{align}
\label{eq:map}
\hxi^\mu(x^\alpha)	& \coloneqq\xi^\mu(\varphi^\beta\left(x^\alpha\right)) \nonumber \\
					& =\gamma^\mu{}_{\nu}\left(\varphi^\beta\left(x^\alpha\right)\right)\,\dfrac{\partial \varphi^{\nu}}{\partial x^{\rho}}(x^\alpha)\,\eta^{\rho}\left(x^\alpha\right) \nonumber \\
					& =A{}^\mu{}_\rho(x^\alpha)\,\eta^{\rho}(x^\alpha),
\end{align}
where we have defined
\begin{align}
\label{eq:defA}
A{}^\mu{}_\nu(x^\alpha)& \coloneqq \gamma^\mu{}_{\rho}\left(\varphi^\beta\left(x^\alpha\right)\right)\,\dfrac{\partial \varphi^{\rho}}{\partial x^{\nu}}(x^\alpha),
\end{align}
to increase readability.
In \eqref{eq:map}, we are applying two separate transformations to the spacetime and its tangent bundle. First, we apply the diffeomorphism $\varphi$ to the spacetime and we accordingly transform (push-forward) the vector field $\eta ^\mu$. Second, we apply a non-degenerate bundle map ${\gamma^\mu}_\nu$ over the spacetime (see e.g. \cite[p.~14]{husemoller1994fibre} and \cite[p.~116]{lee2003introduction}).\footnote{$A{}^\mu{}_\nu$ is a smooth automorphism of the tangent bundle onto itself.} Following the terminology in \cite[p.~87]{lee2003introduction}, we refer to $\xi^\mu$ and $\eta^\mu$ as ``$A$-related" KVFs, where $A$ refers to the composition of the two maps [see \eqref{eq:defA}].

We now assume $\eta ^\mu$ to be a KVF for $g_{\mu\nu}$ and write the Killing equation,
\begin{equation}
\label{eq:Lieg}
\mathscr{L}_{\eta}g_{\mu\nu}=2\nabla_{(\mu}\eta_{\nu)}=0,
\end{equation}
with $\mathscr{L}_{\eta}$ being the Lie derivative along $\eta^\mu$, $\eta _\mu\coloneqq g_{\mu\nu}\eta ^\nu$, and $\nabla_\mu$ being the compatible covariant derivative operator for $g_{\mu\nu}$. We require $\xi^\mu$ to be a KVF for $f_{\mu \nu}$,
\begin{equation}
\label{eq:Lief}
\mathscr{L}_{\xi}f_{\mu \nu}=2\tilde{\nabla}_{(\mu}\xi_{\nu)}=0,
\end{equation}
with $\xi _\mu\coloneqq f_{\mu \nu}\xi ^\nu$ and $\tilde{\nabla}_\mu$ being the compatible 
covariant derivative operator for $f_{\mu \nu}$.\footnote{The requirement of non-vanishingness for $\eta^\mu$ and $\xi^\mu$ is also motivated by their utilization as KVFs of the metrics. If a KVF vanishes at some point, a more detailed treatment should be carried out (see \cite[Section 10.4]{hall2004symmetries}).}

Using \eqref{eq:DifferenceCovD} and \eqref{eq:map} in \eqref{eq:Lief}, we can rewrite the Killing equation for $f_{\mu\nu}$ as follows,
\begin{align}
\label{eq:LieFinal}
\mathscr{L}_{\xi}f_{\mu \nu} 	& =2\tilde{\nabla}_{(\mu}\xi_{\nu)}\nonumber \\
								& =2\left(\nabla_{(\mu}\xi_{\nu)}+{C^\alpha}_{\mu \nu}\xi_\alpha\right) \nonumber \\
								& =2\left[\nabla_{(\mu}\left(f_{\nu)\rho}{A^\rho}_\sigma g^{\sigma\delta}\heta_\delta \right)+{C^\alpha}_{\mu \nu}f_{\alpha \rho}{A^\rho}_\sigma g^{\sigma\delta}\heta_\delta \right] \nonumber \\
								& =2\left[ \nabla_{(\mu}\left(f_{\nu)\rho}{A^\rho}_\sigma g^{\sigma\delta} \right)+{C^\alpha}_{\mu \nu}f_{\alpha \rho}{A^\rho}_\sigma g^{\sigma\delta}\right] \heta_\delta \nonumber \\
								&\quad +2\, f_{(\nu|\rho}{A^\rho}_\sigma g^{\sigma\delta}\left(\nabla_{|\mu)}\heta_\delta \right).
\end{align}
Defining the tensor,\footnote{The tensor $\Upsilon_\nu{}^\delta$ defined in \eqref{eq:adjoint} is the adjoint map of ${A^\mu}_\sigma$. Specifically, $\xi_\nu=\Upsilon_\nu{}^\delta\eta_\delta$.                                                                                                                                                                                                                                                                                                                                                                                                                                                                                                                                                                                                                                                                                                                                                                                                                                                   }
\begin{equation}
\label{eq:adjoint}
\Upsilon_\nu{}^\delta \coloneqq f_{\nu \rho}{A^\rho}_\sigma g^{\sigma\delta},
\end{equation}
\eqref{eq:LieFinal} becomes,
\begin{align}
\label{eq:LieSimple}
\mathscr{L}_{\xi}f_{\mu \nu} & =2\left[ \nabla_{(\mu}{\Upsilon_{\nu)}}^\delta+{C^\alpha}_{\mu \nu}{\Upsilon_\alpha}^\delta\right] \eta_\delta \nonumber\\
						&\quad +2\, \Upsilon {}_{(\nu|} {}^\delta \left(\nabla_{|\mu)}\eta_\delta \right)=0.
\end{align}
Note that we have not used the hypothesis $\nabla_{(\mu}\eta_{\nu)}=0$, because 
it never appears explicitly in \eqref{eq:LieSimple}. Therefore, we can 
deduce that the isometries of $g_{\mu\nu}$ and $f_{\mu\nu}$, in general, are unrelated, as one might expect.

Now, assuming \eqref{eq:KVFs} defines the KVFs of $g_{\mu\nu}$, we can 
find the general relation between them and their \arel KVFs of $f_{\mu\nu}$ by 
substituting \eqref{eq:DifferenceCovD} and \eqref{eq:DifferenceRiemann} in the analog of \eqref{eq:KVFs} for 
$f_{\mu\nu}$. However, we have not been able to deduce any information from the 
resulting equations other than that, in the generic case, the KVFs of two 
metrics are unrelated, as already shown in \eqref{eq:LieFinal}. Nonetheless, we have seen how to make use of \eqref{eq:KVFsb} in the case of HR bimetric theory.

An interesting viewpoint is to consider the two metrics related by a ``generalized" vielbein. In such an approach, the generalized vielbein $V {}^\mu{}_\nu$ would transform one metric into the other, e.g.
\begin{equation}
f_{\mu\nu}=V {}_\mu{}^\rho g_{\rho\sigma} V {}^\sigma{}_\nu,
\end{equation}
as the standard vielbein does when one of the metric is Minkowski. Then, a similar treatment as the one performed in \cite{Chinea:1988} is possible. There, it is shown that the standard vielbein must satisfy some specific constraints in order for the obtained metric to have a given set of isometries. In principle, as a standard vielbein connect the Minkowski metric with a completely different and generic metric, the same holds in our set-up for the generalized vielbein relating $g_{\mu\nu}$ and $f_{\mu\nu}$. Therefore, there cannot be a general relation between the isometries of the two metrics.\footnote{The ``generalized vielbein" is actually determined by the map in \eqref{eq:map}, which is not only a map between vector fields, but actually between the two whole sectors. Indeed, we have already introduced the adjoint map $\Upsilon {}_\nu{}^\delta$ relating the 1-forms; then, every tensor can be mapped from a sector into the other.}

\subsection{Constraints on the \arel KVFs and the isometry groups}
\label{subsec:characters}

In this subsection we find the constraints on the map $A{}^\mu{}_\nu$ in \eqref{eq:map} by requiring the KVFs of the two metrics to share certain properties. In particular, the properties will concern (i) the character of the KVFs and (ii) the Lie algebra that they generate (we refer the reader to \cite{choquet1982analysis,hall2003lie} for a rigorous treatment of the Lie groups and Lie algebras on a (pseudo-)Riemannian manifold).

Let's consider case (i) first. We explore the relation between the character (timelike, spacelike, or null) of two \arel KVFs and how it depends on the map $A{}^\mu{}_\nu$ in \eqref{eq:map}. Specifically, we discuss one condition that $A{}^\mu{}_\nu$ can satisfy in order for the \arel KVFs to have the same character with respect to their associated metric.

This issue is of physical interest because it concerns whether two metrics share or not some KVFs, both being timelike, spacelike or null with respect to the pertinent metric. For example, if the two metrics have different timelike \arel KVFs, then they are both stationary and could be both static (we will come back to this case later), but have different KVFs.

$\xi^\mu$ and $\eta^\mu$ having the same character with respect to $f_{\mu\nu}$ and $g_{\mu\nu}$ means
\begin{equation}
\label{eq:sameCharacter1}
\operatorname{sign}[g_{\alpha\beta}\eta^\alpha\eta^\beta]=\operatorname{sign}[f_{\alpha\beta}\xi ^\alpha\xi^\beta].
\end{equation}
One case in which this relation is certainly satisfied is when (in matrix notation)
\begin{equation}
\label{eq:congruence}
g=A^\tr f A,
\end{equation}
i.e., $A$ is a congruence between $g$ and $f$. Stated differently, the map $A$ between \arel KVFs is a local isometry between $g$ and $f$ (i.e., it is an isometry locally at each point of the spacetime). Then,
\begin{align}
g_{\alpha\beta}\eta^\alpha\eta^\beta=f_{\alpha\beta}\xi ^\alpha\xi^\beta,
\end{align}
which guarantees that $\xi ^\mu$ and $\eta ^\mu$ always have the same character. However, we remark that this is a sufficient, but not necessary, condition to satisfy \eqref{eq:sameCharacter1}.

Next, we consider case (ii). We address the question of when two metrics share their isometry groups, i.e., they determine the same spacetime symmetry. Even if the two metrics have KVFs of the same character, we are not guaranteed that the metrics have the same isometry group. The simplest case concerns stationarity, because it only requires a timelike KVF \cite{Wald:1984rg}. Suppose that one metric, for example $g_{\mu\nu}$, possess a timelike KVF $\eta ^\mu$. Then, if $A{}^\mu{}_\nu$ is a congruence between the metrics $g_{\mu\nu}$ and $f_{\mu\nu}$, the \arel KVF of $\eta^\mu$, i.e., $\xi^\mu$, will be timelike with respect to $f_{\mu\nu}$. Therefore, both metrics will be stationary.\footnote{This result relies on the fact that all 1-dimensional Lie algebras are isomorphic, because the Lie bracket is trivially zero.} However, this does not take into account staticity. Each of the metrics can be static (thus having a different isometry group), if, and only if, their timelike KVFs are orthogonal to a congruence of spacelike hypersurfaces \cite{Wald:1984rg}. No general relation between the \arel KVFs is manifest in this case. Therefore, one metric could be only stationary and the other static; i.e., different isometry groups, where the isometry group of a static metric is a subgroup of the stationary isometry group, the latter including the time reversal.

In the most general case, i.e., when considering arbitrary isometry groups, the relation between the KVFs of two metrics can be analyzed in terms of the Lie algebras of the isometry groups (which are Lie groups) of the two metrics. Let $\{ \eta _{(n)}^\mu \}_{n\in \{1,...,\delta\}}$, with $\delta$ dimension of the isometry group, be a subset of the KVFs of $g_{\mu\nu}$ generating a Lie algebra defined by the Lie bracket,
\begin{align}
\label{eq:LieBracket}
\left[ \eta _{(n)},\eta _{(m)} \right](F)\coloneqq\eta _{(n)}\left( \eta _{(m)}(F) \right)-\eta _{(m)}\left( \eta _{(n)}(F) \right),
\end{align}
where $m\in \{1,...,\delta\}$ and $F$ is a $C^\infty $ scalar function defined on the differentiable manifold. In order $f_{\mu\nu}$ to have the same isometry group as $g_{\mu\nu}$, a necessary condition is that the set of \arel KVFs $\{ A(\heta _{(n)}^\mu) \}_{n\in \{1,...,\delta\}}$ generates a Lie algebra which is isomorphic to the one generated by the KVFs of $g_{\mu\nu}$. This implies that the map $A{}^\mu{}_\nu$ between \arel KVFs is a Lie algebra isomorphism. We remark that, even in the case of \arel KVFs generating the same Lie algebra, they need not to be the same. Indeed, a Lie algebra can have different generators (i.e., different bases), and different choices determine different KVFs, as we saw explicitly in the examples in \autoref{subsec:explicit}. Also, we recall that different Lie groups can have the same Lie algebra, and that is why $A{}^\mu{}_\nu$ being a Lie algebra isomorphism is a necessary condition, but not sufficient to have the same isometry group.\footnote{However, Lie groups sharing the Lie algebra have the same universal covering \cite{hall2003lie}.}

\subsection{Two particular cases}
\label{subsec:particularCases}

In this subsection we discuss two relevant cases allowing us to understand more deeply our examples in \autoref{subsec:explicit}.

The first case, concerning an isomorphism between Lie algebras, is when $\gamma{}^\mu{}_\nu=\delta{}^\mu{}_\nu$ in \eqref{eq:map} and the map $A{}^\mu{}_\nu =\left(\varphi _*\right){}^\mu{}_\nu=(\dd \varphi){}^\mu{}_\nu$ is simply the differential map (push-forward) of a diffeomorphism $\varphi$ from the spacetime to itself.\footnote{One can argue that this 
must be the case, because (active) diffeomorphism of a manifold into 
itself can be thought as (passive) change of coordinates, which cannot 
modify the geometric properties of the manifold.} Indeed, the push-forward is a linear map that preserves the Lie bracket, 
(see Proposition 4.3.10 in \cite{Conlon:2001diff}), i.e., for any two vector
fields $X$ and $Y$
\begin{align}
\label{eq:diffBracket}
\varphi_* \left[ X,Y \right] = \left[ \varphi_*X,\varphi_*Y \right],
\end{align}
i.e., $\varphi$-related KVFs. This case concerns metrics which have the same form in different coordinate systems (i.e., related by a diffeomorphism). At first 
glance, this case may seem trivial, but actually it is not, since the 
\emph{same} coordinate system can be, e.g. spherical for one metric and not for 
the other. The second example and some cases within the the first example in \autoref{subsec:explicit}, as discussed there, have $\varphi$-related KVFs, hence belonging to this category.

The second case we consider is when no diffeomorphism in \eqref{eq:map} is applied and
\begin{equation}
A{}^\mu{}_\nu=\gamma{}^\mu{}_\nu=f^{\mu\rho}g_{\rho\nu}.
\end{equation}
An example belonging to this category will be treated in Appendix \ref{app:ansatz}. In this case, $\Upsilon{}_\mu{}^\nu=\delta{}_\mu{}^\nu$ and the two vector fields $\xi ^\mu$ and $\eta ^\mu$ have the same dual,
\begin{equation}
\xi _\mu=f_{\mu\rho}\xi ^\rho=g_{\mu\rho}\eta^\rho=\eta _\mu.
\end{equation}
In addition, \eqref{eq:LieSimple} reduces to,
\begin{align}
\label{eq:sameCovectors}
\mathscr{L}_{\xi}f_{\mu\nu} & =\left[ 2\nabla_{(\mu}{\delta_{\nu)}}^\delta+{C^\alpha}_{\mu\nu}{\delta_\alpha}^\delta\right] \eta_\delta +2\, {\delta_{(\nu|}}^\delta \left(\nabla_{|\mu)}\eta_\delta \right) \nonumber \\
						& ={C^\alpha}_{\mu\nu} \eta_\alpha +2\nabla_{(\mu}\eta_{\nu)}={C^\alpha}_{\mu\nu} \eta_\alpha \nonumber \\
						&=\left({\tilde{\Gamma}^\alpha}_{\mu\nu}-{\Gamma^\alpha}_{\mu\nu}\right) \eta_\alpha=0,
\end{align}
where, in the second line, we have used our original hypothesis in \eqref{eq:Lieg}, i.e., 
$\nabla_{(\mu}\eta_{\nu)}=0$. With \eqref{eq:sameCovectors} we show that, for $A{}^\mu{}_\nu=f^{\mu\rho}g_{\rho\nu}$, $\xi ^\mu$ is a KVF for $f$ if, and only if, $\eta _\mu$ is orthogonal to ${C^\alpha}_{\mu\nu}$.\footnote{Equivalently, \eqref{eq:sameCovectors} is an eigenequation for $C^\alpha{}_{\mu\nu}$ and $\eta^\mu$ must be an eigenvector of $C^\alpha{}_{\mu\nu}$ with eigenvalue zero.} One particular solution of \eqref{eq:sameCovectors} is obtained when $g_{\mu\nu}$ and $f_{\mu\nu}$ share their compatible covariant derivatives,
\begin{align}
\label{eq:sameCovD}
\nabla_\mu=\tilde{\nabla_\mu} \Longrightarrow {\Gamma^\alpha}_{\mu\nu}=\tilde{\Gamma}^\alpha_{\mu\nu}.
\end{align}
It is well known that, given a metric on a differentiable manifold, its compatible covariant derivative is uniquely determined (see e.g. \cite[Theorem 3.1.1]{Wald:1984rg}). However, the converse is not true, as was shown in \cite{Hall:1983a,Hall:1983b,hall2004symmetries}. Given a covariant derivative operator on a differentiable manifold, there can be more than one metric compatible with it. There are four possible cases, one of them being when the two metrics are proportional (the most commonly encountered, according to \cite{Hall:1983a,Hall:1983b}). Arguably, this is the least interesting case in physics, because, the metrics being proportional, their null cones (hence their causal structures) are the same.

\section*{Conclusions}

We study spacetime symmetries and topology in the HR bimetric theory, where two metrics $g_{\mu\nu}$ and $f_{\mu\nu}$ are defined on the same manifold. We determine the conditions for the metrics to share their isometries, but we note they can also have different isometries. In detail, we present a proposition stating that, if any two metrics within the set $\left\lbrace h_\alpha \coloneqq g\left(g^{-1}f \right)^\alpha : {\alpha \in \mathbb{R}}\right\rbrace$ have the same KVF, then all other fields in the theory must have the same KVF. Also, we find a differential equation which determines a KVF shared by both metrics.

We point out that many properties of spacetime symmetries, valid in GR, are not valid in HR bimetric theory. For instance, an isometry of one metric is not necessarily a collineation for the minimally coupled stress--energy tensor, due to the tensor potentials $V_g{}^\mu{}_\nu$ and $V_f{}^\mu{}_\nu$ in the bimetric field equations. Also, in vacuum, a KVF $\xi ^\mu$ does not have to satisfy $\dalamb\xi^\mu=0$. However, we pointed out another geometrical property of a KVF, following from the Komar identity, i.e., $\nabla _{\mu}\left(\dalamb\xi^\mu \right)=0$, for which we provide an alternative proof in appendix \ref{app:alternativeproof}.

In HR bimetric theory, concerning collineations of the tensors in the theory, three configurations are possible:
\begin{enumerate}
	\item The two metrics have the same KVFs, which define collineations for every tensor in both sectors.
	\item The two metrics do not have the same KVFs.
	\begin{enumerate}
		\item In vacuum, tensors in the $g$-sector have the same collineations as $g_{\mu\nu}$, and those in the $f$-sector the same as $f_{\mu\nu}$.
		\item When stress--energy tensors are present, tensors in the $g$-sector do not necessarily have the same collineations as $g_{\mu\nu}$, and tensors in the $f$-sector do not necessarily have the same collineations as $f_{\mu\nu}$.
	\end{enumerate}
\end{enumerate}

Therefore, contrary to GR, an isometry of the metric is not necessarily a spacetime symmetry for all other tensors. This raises the intriguing question of what the best definition of a spacetime symmetry in HR bimetric theory is. If the two metrics share the isometry, then all tensors have the same spacetime symmetry, similar to GR. In this case, the isometries of the metrics are spacetime symmetries for the whole system and we call them ``bimetric symmetries." It seems reasonable to consider bimetric symmetries as ``authentic" spacetime symmetries in bimetric relativity---i.e., symmetries for every tensors in the theory---but we stress that there are other possibilities that can be considered as authentic. For instance, we can have symmetries of one sector only, which we call ``sectoral symmetries", and symmetries of some tensors only, named ``narrow symmetries". Sectoral symmetries can be regarded as authentic, in some cases. For example, we can have two spherically symmetric metrics with different KVFs in vacuum (as we show in Example III in \autoref{subsec:explicit}). One can think about this as a spherically symmetric spacetime. Therefore, it seems reasonable to define an authentic spacetime symmetries as the minimal symmetry group shared by the sectors, without reference to the associated symmetry vector fields.

We clarify that a topological constraint limits the number of conceivable metric combinations. The unique diffeomorphism invariance of the theory allows us to have only a single set of coordinate charts covering the spacetime. Such a set is compatible with one, and only one topology, and both metrics must be compatible with it.

We present examples showing the relations between the spacetime symmetries, the KVFs and the topologies of the two sectors. We provide several methods to determine possible ansatzes with metrics having different isometry groups, different KVFs or both in Appendix \ref{app:ansatz}.

Interesting open questions remain unanswered in our analysis. First, it would be interesting to find non-GR (analytical or numerical) solutions not sharing their KVFs and/or their isometry groups. Second, our study does not involve the fixed point structure of the KVFs' flow, and it would be desirable to understand the relations between these structures in the two sectors. Third, we have not considered the relationship between KVFs of the two metrics and Lie point symmetries  \cite[p.~129]{stephani_kramer_maccallum_hoenselaers_herlt_2003} of the bimetric field equations.

In principle, new ansatzes can be found and used in HR bimetric theory if we allow for different KVFs, or different isometries of the two metrics, a fact that has gained little attention in the literature so far. Our results clarify that this is definitely a promising possibility to enlarge the spectrum of (hopefully exact) solutions in bimetric relativity.

\begin{acknowledgments}

We are grateful to Ingemar Bengtsson for many rewarding discussions, for reading the manuscript and providing valuable comments. We also thank Fawad Hassan, Bo Sundborg, Kjell Rosquist, Luis Apolo, Nico Wintergerst and Mikael von Strauss for many helpful and inspiring discussions.

\end{acknowledgments}

\appendix

\section{How to determine ansatzes having different KVFs}
\label{app:ansatz}

In the main text, we have studied some particular solutions of the bimetric field equations, focusing on their isometries and on the shared spacetime topology. We concluded that there are both solutions where the metrics share their isometry group but not their KVFs, and solutions where the metrics do not have the same isometry group. In this appendix we devise three different methods of generating infinite sets of solutions belonging to these two categories. Hence the particular examples studied in the main text are not unique.

In the first example, the two metrics will be Einstein metrics with vanishing cosmological constants and vanishing stress--energy tensors in both sectors, i.e.
\begin{subequations}
\label{eq:Einstein}
\begin{align}
G_g{}^\mu{}_\nu = G_f{}^\mu{}_\nu &= 0, \\
T_g{}^\mu{}_\nu = T_f{}^\mu{}_\nu &= 0.
\end{align}
\end{subequations}
The field equations to be solved are
\begin{align}
\label{eq:tensorpotentials}
V_g{}^\mu{}_\nu = V_f{}^\mu{}_\nu = 0.
\end{align}
We note that, if the field equations \eqref{eq:EoM} hold, the Bianchi constraints \eqref{eq:Bianchi} are automatically satisfied.\footnote{If we solve the bimetric field equations \eqref{eq:EoM} without using the Bianchi constraints \eqref{eq:Bianchi}, the solution will satisfy the Bianchi constraints automatically. The latter are enclosed and hidden in the field equations. However, the explicit use of them makes it simpler to solve the field equations.}
In the second example, we introduce stress--energy tensors in both sectors to cancel the respective Einstein tensors and again we are left to solve \eqref{eq:tensorpotentials}. In the last example, we consider a non-GR solution.

\subsection*{Method 1: Generating Minkowski solutions with Lorentz transformations}
We pick by hand the metric $g_{\mu\nu}$ to be the Minkowski metric. Then there exists a coordinate chart $x^\mu=(t,x,y,z)$ such that the components of $g_{\mu\nu}$ are
\begin{align}
\label{eq:gMink}
	g_{\mu\nu} = \eta_{\mu\nu} = \mathrm{diag}(-1,1,1,1).
\end{align}
Note that there are ten degrees of freedom (DOFs) in each of the metrics and hence twenty in total. Choosing $g_{\mu\nu}$ as in \eqref{eq:gMink}, its DOFs are completely determined. Since $f=gS^2$, the ten undetermined DOFs in $f$ can be redistributed to the square root matrix $S$. For Einstein metrics, the equations \eqref{eq:tensorpotentials} completely determine the four eigenvalues $\lambda_1,...,\lambda_4$ of the square root matrix in terms of the $\beta$-parameters and one of $\beta$-parameters in terms of the others \cite{Kocic:2017wwf}.\footnote{Unless for very a special set of the $\beta$-parameters, referred to as algebraically decoupled in \cite{Kocic:2017wwf}.} This leaves six undetermined DOFs in $S{}^\mu{}_\nu$, which can be determined by choosing its diagonalizing matrix. Here we choose a constant Lorentz transformation $\Lambda{}^\mu{}_\nu$ so that 
\begin{align}
	(\Lambda^{-1}){}^\mu{}_\rho S{}^\rho{}_\sigma \Lambda{}^\sigma{}_\mu = \mathrm{diag}(\lambda_1,\lambda_2,\lambda_3,\lambda_4).
\end{align}
With this choice, $f_{\mu\nu}$ simply becomes a Minkowski metric,
\begin{align}
\label{eq:f_ansatz}
	f_{\mu\nu} = (\Lambda^{-1,\tr}){}_\mu{}^\rho \mathrm{diag}(-\lambda_1^2,\lambda_2^2,\lambda_3^2,\lambda_4^2)_{\rho\sigma} (\Lambda^{-1}){}^\sigma{}_\nu.
\end{align}
It is easy to show that $f_{\mu\nu}$ is a Minkowski metric, albeit that the Cartesian coordinates of $f_{\mu\nu}$ are different from those of $g_{\mu\nu}$. Defining a new set of coordinates $x'^\mu$ via
\begin{align}
\label{eq:diff}
	\frac{\partial x'^\mu}{\partial x^\nu} = \mathrm{diag}(\lambda_1,\lambda_2,\lambda_3,\lambda_4)\Lambda^{-1},
\end{align}
this is a set of first-order linear partial differential equations and always has a solution for $x'^\mu$ as an (invertible) function of $x^\mu$, so it is a valid coordinate transformation and $f_{\mu'\nu'} = \mathrm{diag}(-1,1,1,1)$. Hence $0=G_f{}^{\mu}{}_{\nu} = G_f{}^{\mu'}{}_{\nu'}$. We conclude that 
\begin{align}
g_{\mu\nu} &= -\mathrm{d}t^2 + \mathrm{d}x^2 + \mathrm{d}y^2 + \mathrm{d}z^2, \nonumber\\
f_{\mu\nu} &= \left[{{(\Lambda^{-1,\tr})}_\mu}^\rho \mathrm{diag}(-\lambda_1^2,\lambda_2^2,\lambda_3^2,\lambda_4^2)_{\rho\sigma} {{(\Lambda^{-1})}^\sigma}_\nu\right]\mathrm{d}x^\mu \mathrm{d}x^\nu \nonumber\\
&= -\mathrm{d}t'^2 + \mathrm{d}x'^2 + \mathrm{d}y'^2 + \mathrm{d}z'^2,
\end{align}
solves the bimetric field equations. Note that $\Lambda{}^\mu{}_\nu$ is an arbitrary constant Lorentz matrix. Choosing different Lorentz matrices (i.e., choosing different Lorentz diffeomorphisms between the sectors) yields different solutions. Since $g$ and $f$ are related via the diffeomorphism \eqref{eq:diff} they must have the same symmetry group, as explained in \autoref{subsec:particularCases}. However, the Cartesian charts of $g_{\mu\nu}$ and $f_{\mu\nu}$ do not coincide generally and hence the KVFs of $g_{\mu\nu}$ and $f_{\mu\nu}$ may differ; they are related by the Lorentz transformation $\Lambda{}^\mu{}_\nu$.

As an example, let $\beta_{0,2,3}=1$, $\beta_1 = 0$, $\beta_4 = 2$, $\lambda_{1,3}=(1+\sqrt{5})/2$, and $\lambda_{2,4}=(1-\sqrt{5})/2$. This choice of parameters satisfies $V_g{}^\mu{}_\nu = V_f{}^\mu{}_\nu = 0$. We choose the Lorentz matrix to be a boost of velocity 1/2 in the $x$-direction combined with a rotation of angle $-\pi/4$ around the $x$-axis. $f_{\mu\nu}$ becomes
\begin{align}
\label{eq:fsol}
f_{\mu\nu}&=
\begin{pmatrix}
-\frac{1}{6}\left(9+5\sqrt{5}\right) & -\frac{2\sqrt{5}}{3} & 0 & 0 \\
-\frac{2\sqrt{5}}{3} & \frac{1}{6}\left(9-5\sqrt{5}\right) & 0 & 0 \\
0 & 0 & \frac{3}{2} & -\frac{\sqrt{5}}{2} \\
0 & 0 & -\frac{\sqrt{5}}{2} & \frac{3}{2} \\
\end{pmatrix}\!.
\end{align}
It is now straightforward to check whether or not the KVFs of $g_{\mu\nu}$ and $f_{\mu\nu}$ coincide. The KVF generating the rotational symmetry of $g_{\mu\nu}$ around the $x$-axis is $\eta=-z\partial_y+y\partial_z$. The Lie derivative of $f_{\mu\nu}$ with respect to $\eta^\mu$ does not vanish. Rather, the KVF of $f_{\mu\nu}$ generating the rotational symmetry around the $x'$-axis is $\xi = -z'\partial_{y'}+y'\partial_{z'}$, or, expressed in the Cartesian chart $x^\mu$ for $g_{\mu\nu}$,
\begin{align}
	\xi = \left(-\sqrt{5}y+3z\right)\partial_y+\left(-3y+\sqrt{5}z\right)\partial_z.
\end{align}
Then, the metrics share their isometry group, but not all of their KVFs.  

Note how the assumption of a constant $\Lambda{}^\mu{}_\nu$ in \eqref{eq:f_ansatz} insured $g_{\mu\nu}$ and $f_{\mu\nu}$ to be related via a diffeomorphism and hence $G_f{}^\mu{}_\nu$ to vanish. The method of generating solutions could be generalized by retaining the spacetime dependence of the Lorentz matrix. In that case however, $G_f{}^\mu{}_\nu=0$ does not hold automatically and becomes a differential equation for $\Lambda(x){}^\mu{}_\nu$.

When solving the bimetric field equations it is common to demand that a flat spacetime solution exists in the theory. Usually this is achieved by demanding $\eta_{\mu\nu} = g_{\mu\nu} = c^2 f_{\mu\nu}$ to be a solution of the field equations, yielding the conditions, sometimes called asymptotic flatness conditions,
\begin{align}\label{eq:sc-assy}
\beta_0 &= -3\beta_1-3\beta_2 - \beta_3, \nonumber \\
\beta_4 &= -\beta_1 - 3\beta_2 -3\beta_3.
\end{align}
If we are minimally coupled to $g_{\mu\nu}$, the trajectory of a point-like test particle will be a solution of the geodesic equation defined by $g_{\mu\nu}$. Hence all measurable geometrical quantities will be contained in the metric $g_{\mu\nu}$. Accordingly, demanding $g_{\mu\nu}=\eta_{\mu\nu}$ is sufficient in order to obtain a universe that we would measure as flat. In fact, even in the case where both metrics are Minkowski, the solutions \eqref{eq:gMink}, \eqref{eq:fsol} shows that the $\beta$-parameters does not need to satisfy the conditions \eqref{eq:sc-assy}. Therefore, demanding \eqref{eq:sc-assy} is unnecessarily restrictive in order for a bi-flat solution to exist in the theory. On the other hand, demanding $g_{\mu\nu}=\eta_{\mu\nu}$ does constraint $f_{\mu\nu}$ to have a specific form \cite{Kocic:2017wwf}.

\subsection*{Method 2: Decoupled interaction terms}

Let's choose a Lorentzian metric $g_{\mu\nu}$ and use Synge's method to define the stress--energy tensor minimally coupled to $g_{\mu\nu}$. Then, $T_g{}^\mu{}_\nu$ is proportional to the Einstein tensor of $g_{\mu\nu}$, i.e.
\begin{align}
\label{eq:TgGg}
	T_g{}^\mu{}_\nu := M_g^2\, G_g{}^\mu{}_\nu.
\end{align}
For this construction to yield a physical solution it is important to pick the metric $g_{\mu\nu}$ yielding the stress--energy tensor in \eqref{eq:TgGg} satisfying all the desired properties, e.g. the null energy condition (NEC). For example, if $g_{\mu\nu}$ is the FLRW metric for a homogeneous and isotropic matter dominated universe, $g_{\mu\nu}=-\mathrm{d}t^2+t^{4/3}\mathrm{d}\vec{x}$, then the stress--energy as defined by \eqref{eq:TgGg} is physical, since we know that its Einstein tensor is proportional to $\mathrm{diag}(t^{-2},0,0,0)$ and so it describes a pressure-less perfect fluid. In such a case, the tensor potential $V_g{}^\mu{}_\nu$ decouples from the field equations and we are left with $V_g{}^\mu{}_\nu=0$.

Now let $S{}^\mu{}_\nu$ be a rank (1,1) tensor such that $gS=(gS)^\tr$ and $V_f{}^\mu{}_\nu(S)=0$. The first equation tells us that $h:=gS$ is a symmetric rank (0,2) tensor, implying that the $f$ metric, $f:=hS$, is also symmetric. To be explicit, $f^\tr := S^\tr h^\tr = S^\tr h = S^\tr g S=(gS)^\tr S=f$. Now, defining $T_f{}^\mu{}_\nu$ to be $T_f{}^\mu{}_\nu := M_f^2\, G_f{}^\mu{}_\nu$, the bimetric field equations are satisfied. Concerning the Bianchi constraints, they are satisfied by construction since both tensor potentials vanishes and the stress--energy tensors in both sectors are defined to be the Einstein tensors of the two metrics.

Note that the only equations that needs to be satisfied when implementing the method are $gS=(gS)^\tr$ to obtain a symmetric $f_{\mu\nu}$ metric and $V_{g,f}{}^\mu{}_\nu(S)=0$. The solution of the last two equations determines in the general case the (constant) eigenvalues of $S{}^\mu{}_\nu$ in terms of the $\beta$-parameters and determines one of the $\beta$-parameters in terms of the others. 

To illustrate the method, let $g_{\mu\nu}$ be the Minkowski metric so that the ten DOFs in $g_{\mu\nu}$ are completely specified. Then the Einstein tensor for $g_{\mu\nu}$ vanishes and accordingly $T_g{}^\mu{}_\nu:=0$. Furthermore, assume that
\begin{align}
\label{eq:STypeI}
(\Lambda^{-1}(x)){}^\mu{}_\rho S{}^\rho{}_\sigma \Lambda(x){}^\sigma{}_\nu = \mathrm{diag}(\lambda_1,\lambda_2,\lambda_3,\lambda_4),
\end{align}
where $\Lambda(x)$ is a non-constant, local Lorentz matrix. Now, the ten remaining DOFs are distributed as four in the eigenvalues of $S{}^\mu{}_\nu$ and six in the Lorentz matrix. From \eqref{eq:STypeI} we see that $gS=(gS)^\tr$ is satisfied by construction. To satisfy $V_{g,f}{}^\mu{}_\nu(S)=0$ we solve for the eigenvalues of $S{}^\mu{}_\nu$ and for $\beta_4$, so that $\lambda_i=\lambda_i(\beta_0,...,\beta_4)$ and $\beta_4=\beta_4(\beta_0,...,\beta_3)$. For definiteness, let ${{\Lambda(x)}}$ be a rotation around the $z$-axis with the rotation angle depending on $z$,
\begin{align}
\Lambda^{-1,\tr}(z)&=
\begin{pmatrix}
1 & 0 & 0 & 0 \\
0 & \cos (z) & - \sin (z) & 0 \\
0 & \sin (z) & \cos (z) & 0 \\
0 & 0 & 0 & 1 \\
\end{pmatrix}\!.
\end{align}
$f_{\mu\nu}$ becomes
\begin{align}
f_{\mu\nu}&= \nonumber \\
&\hspace{-6mm}\begin{pmatrix}
-\lambda_1^2 & 0 & 0 & 0 \\
0 & \lambda_2^2 \cos^2 (z) + \lambda_3^2  \sin^2 (z) & \frac{1}{2}\left(\lambda_3^2-\lambda_2^2\right)\sin (2z) & 0 \\
0 & \frac{1}{2}\left(\lambda_3^2-\lambda_2^2\right)\sin (2z) & \lambda_3^2 \cos^2 (z) + \lambda_2^2 \sin^2 (z) & 0 \\
0 & 0 & 0 & \lambda_4^2 \\
\end{pmatrix}\!.
\end{align}
The Einstein tensor for $f_{\mu\nu}$ can then be computed and we define $T_f{}^\mu{}_\nu := M_f^2\, G_f{}^\mu{}_\nu$. The explicit expression is not presented here but it is important to note that if $\lambda_2^2 \neq \lambda_3^2$, it is not vanishing. Thus, $f_{\mu\nu}$ is not the Minkowski metric. Interestingly enough, the Ricci scalar of $f_{\mu\nu}$ is a constant depending on the square root eigenvalues $\lambda_i$
\begin{align}
	R_f = -\frac{(\lambda_2^2-\lambda_3^2)^2}{2\lambda_2^2 \lambda_3^2 \lambda_4^2}.
\end{align}
However, the Ricci tensor of $f_{\mu\nu}$ is not proportional to the identity matrix,
\begin{align}
R_f{}^\mu{}_\nu &\propto \nonumber\\
&\hspace{-10mm}\begin{pmatrix}
0 & 0 & 0 & 0 \\
0 & (\lambda_2^2+\lambda_3^2) \cos(2z) & -(\lambda_2^2+\lambda_3^2)\sin (2z) & 0 \\
0 & -(\lambda_2^2+\lambda_3^2)\sin (2z) & -(\lambda_2^2+\lambda_3^2) \cos(2z) & 0 \\
0 & 0 & 0 & -(\lambda_2^2-\lambda_3^2) \\
\end{pmatrix}\!,
\end{align}
so $f_{\mu\nu}$ is not an (A)dS metric. Hence it is not maximally symmetric and does not have the same isometry group as $g_{\mu\nu}$. 

To summarize, due to screening from the stress--energy tensors, the tensor potentials of $g_{\mu\nu}$ and $f_{\mu\nu}$ decouple from the field equations. The interaction between the two metrics vanishes, i.e., the metrics are effectively non-interacting. Hence it is not surprising that $g_{\mu\nu}$ and $f_{\mu\nu}$ could have completely different isometry groups. A note on the stress--energy tensor of $f_{\mu\nu}$ might also be in place. Since $T_f{}^\mu{}_\nu$ is defined to be proportional to the Einstein tensor for $G_f{}^\mu{}_\nu$, we are not guaranteed that such a stress-energy tensor would satisfy, e.g., the energy conditions. In fact, as shown in \cite{Baccetti:2012re}, in HR bimetric theory in vacuum, there is a strong anti-correlation between the null energy conditions in the $g$- and $f$-sector. However, it might be argued that from an observational viewpoint, the $f$-sector is insignificant as long as the $g_{\mu\nu}$ sector is well-behaved since only $g_{\mu\nu}$ and $T_g{}^\mu{}_\nu$ are measurable.

\subsection*{Method 3: Non-GR solutions}

Many works in the literature studied particular solutions of HR bimetric theory, both 
in vacuum and with matter sources (for a review, see e.g. 
\cite{Schmidt-May:2015vnx} and references therein), and the usual approach is 
to choose some ansatz for the two metrics, motivated by an assumption about 
their isometries. The approach that we follow here is essentially the same, but now 
we allow for different KVFs in the two sectors. In particular, we assume a 
specific form of the map $A{}^\mu{}_\nu$ between the KVFs, and we solve 
\eqref{eq:LieSimple} for $f_{\mu\nu}$ (or, equivalently, $S{}^\mu{}_\nu$). Afterwards, we check that 
the obtained ansatz satisfies the bimetric field equations \eqref{eq:EoM}. If so, then the determined ansatz describes a solution of the theory, otherwise such a configuration is not allowed in the theory.

We have discussed the possibility of having the same isometry group, but different KVFs. In this case, the two metrics do not look explicitly symmetric in the same coordinate chart. Indeed, we can only choose one set of generators of the Lie algebra of the isometry group, and if we choose the coordinate basis to be aligned with some of the generators of the Lie algebra, then the components of another set of generators will not look as simple in the same coordinate basis.

The first potential ansatz we consider is obtained by choosing, in \eqref{eq:map},
\begin{equation}
\varphi{}^\mu{}_\nu=\delta {}^\mu{}_\nu, \qquad A{}^\mu{}_\nu=\gamma {}^\mu{}_\nu=f^{\mu\nu}g_{\mu\nu}=\left(S^{-2}\right){}^\mu{}_\nu.
\end{equation}
In \autoref{subsec:particularCases}, we saw that in this case, the majority of the cases (see \cite{Hall:1983a,Hall:1983b,hall2004symmetries}) provides two proportional metrics, because the two metrics have to share their compatible covariant derivative (see \eqref{eq:sameCovectors}). In HR bimetric theory, if the metrics are proportional, then the field equations \eqref{eq:EoM} reduce to two decoupled copies of the Einstein equations, i.e., $g_{\mu\nu}$ is a solution of the Einstein equations and $f_{\mu\nu}=c^2g_{\mu\nu}$, with $c\in \mathbb{R}$; hence $S{}^\mu{}_\nu=c \, \delta {}^\mu{}_\nu$. We conclude that, assuming $A{}^\mu{}_\nu=\left(S^{-2}\right){}^\mu{}_\nu$, the ansatz having proportional metrics and proportional KVFs is consistent.

Note that $\left(S^{-2}\right){}^\mu{}_\nu$ is not a congruence between $g_{\mu\nu}$ and $f_{\mu\nu}$ in HR bimetric theory. Therefore, in this case, we are not guaranteed that \eqref{eq:sameCharacter1} holds; however, $A{}^\mu{}_\nu$ being a congruence between $g_{\mu\nu}$ and $f_{\mu\nu}$ is a sufficient, but not necessary condition for the KVFs of the two sectors to have the same character. Therefore, there are solutions for which the KVFs are not related by a congruence between the metrics, but still have \arel KVFs with the same character in the two sectors.

The second potential ansatz we consider is,
\begin{align}
\varphi{}^\mu{}_\nu=\delta {}^\mu{}_\nu, \qquad A{}^\mu{}_\nu=\gamma {}^\mu{}_\nu=\left(S^{-1}\right){}^\mu{}_\nu.
\end{align}
for which we are assured that paired KVFs have the same character with respect to the corresponding metric. Indeed, in HR bimetric theory, by definition we are guaranteed that the square root $S{}^\mu{}_\nu=\sqrt{g^{-1}f}\,{}^\mu{}_\nu$ exists \cite{Hassan:2017ugh}, and thus (in matrix notation),
\begin{equation}
f=S^\tr g \, S.
\end{equation}
On the other hand, in this case \eqref{eq:LieSimple} is not as simple as the previous case, since
\begin{equation}
\Upsilon =fS^{-1}g^{-1}= f\sqrt{f^{-1}g}g^{-1}=S^\tr,
\end{equation}
which implies (coming back to the index notation),
\begin{align}
\label{eq:LieSTranspose}
\mathscr{L}_{\xi}f_{\mu\nu} & =\left[ 
2\nabla_{(\mu}{{S^\tr}_{\nu)}}^\beta-{C^\alpha}_{\mu\nu}{{S^\tr}_\alpha}^\beta\right] \eta_\beta \nonumber \\
						&\quad \, +2{{S^\tr}_{(\nu|}}^\beta\left(\nabla_{|\mu)}\eta_\beta \right)=0.
\end{align}

We report here the solution of \eqref{eq:LieSTranspose} in the case of a static and spherically symmetric $g_{\mu\nu}$ in the spherical polar chart $(t,r,\theta,\phi)$,
\begin{align}
\label{eq:SSSg}
g&=
	 \begin{pmatrix}
	   	-\ee ^q(r)F(r) & 0 & 0 & 0 \\
	   	0 & F(r)^{-1} & 0 & 0 \\
	   	0 & 0 & r^2 & 0 \\
	   	0 & 0 & 0 & r^2 \sin^2 (\theta ) \\
	  \end{pmatrix}
\end{align}
with the usual set of KVFs, assuming a handy expression for $S{}^\mu{}_\nu$. We first determine the most general expression for $S{}^\mu{}_\nu$, obtained by the constraint that $h_{\mu\nu}=g_{\mu\rho}S{}^\rho{}_\nu$ (\emph{the symmetrizing quadratic form}) is symmetric; the symmetrisation of $h_{\mu\nu}$ automatically implies the same for $f_{\mu\nu}$. The resulting $S{}^\mu{}_\nu$ is,
\begin{align}
\label{eq:SSSS}
S{}^\mu{}_\nu&=
	 \begin{pmatrix}[1.45]
	   	S_{00} & S_{01} & S_{02} & S_{03} \\
	   	-\ee^qF^2S_{01} & S_{11} & S_{12} & S_{13} \\[0.35mm]
	   	-\dfrac{\ee^qF}{r^2}S_{02} & \dfrac{S_{12}}{Fr^2} & S_{22} & S_{23} \\[2.25mm]
	   	-\dfrac{\ee^qF}{r^2\sin^2 (\theta)}S_{03} & \dfrac{S_{13}}{Fr^2\sin^2 (\theta)} & \dfrac{S_{23}}{\sin^2 (\theta)} & S_{33} \\
	  \end{pmatrix}\!,
\end{align}
where all the functions $S_{ij}$ depend on all the variables $(t,r,\theta,\phi)$. Since solving \eqref{eq:LieSTranspose} with the general ansatz \eqref{eq:SSSS} turns out to be quite involved, we assumed a simpler form for $S{}^\mu{}_\nu$, setting to 0 some of its components. Here we report only one case which turns out, in the end, to have proportional KVFs, with

Let's assume the following $S{}^\mu{}_\nu$,
\begin{align}
\label{eq:SSSSsimple}
S{}^\mu{}_\nu&=
	 \begin{pmatrix}
	   	S_{00} & S_{01} & 0 & 0 \\
	   	-\ee^qF^2S_{01} & S_{11} & 0 & 0 \\
	   	0 & 0 & S_{22} & 0 \\
	   	0 & 0 & 0 & S_{33} \\
	  \end{pmatrix}\!,
\end{align}
\noindent where the functions $S_{ij}$ depend on all the variables. Solving \eqref{eq:LieSTranspose} yields,
\begin{align}
\label{eq:SSSSsol}
S{}^\mu{}_\nu&=
	 \begin{pmatrix}
	   	S_{00}(t) & 0 & 0 & 0 \\
	   0 & S_{11}(r) & 0 & 0 \\
	   	0 & 0 & \lambda & 0 \\
	   	0 & 0 & 0 & \lambda \\
	  \end{pmatrix}\!,
\end{align}
with $\lambda= \mathrm{const}$. Given that $\xi^\mu=\left(S^{-1}\right){}^\mu{}_\nu\eta^\nu$, we conclude that the $S^{-1}$-related KVFs are proportional in this framework.

The field equations \eqref{eq:EoM} and the Bianchi constraint \eqref{eq:Bianchi} require $S_{00}(t)=\mathrm{const}$. This ansatz only contains three field variables, namely $q(r)$, $F(r)$ and $S_{11}(r)$, but the field equations and the Bianchi constraints constitute a system of seven coupled ordinary differential equations. Therefore, the system is over-determined and after solving analytically for the three fields, there are algebraic equations and consistency relations to be satisfied. The solution of this differential--algebraic system (analytical or numerical) is left for future work.
\vspace{0.5cm}

\section{Alternative proof for \eqref{eq:Komar}}
\label{app:alternativeproof}

The Komar identity, presented in \cite{PhysRev.113.934} and proved in \cite{doi:10.1063/1.1705011}, states that, for a generic vector field $V^\mu$,
\begin{equation}
\label{eq:Komar2}
\nabla _\mu \nabla _\nu \nabla ^{[\nu}V^{\mu]} \equiv 0.
\end{equation}
This a purely geometrical identity, therefore it holds in any modified theory of gravity relying on (pseudo)-Riemannian geometry.

The proof in \cite{doi:10.1063/1.1705011} is based on the variation of the Einstein--Hilbert Lagrangian density. We now present an alternative and simpler proof.

If $\nabla ^{[\nu}V^{\mu]}=0$, then the Komar identity is trivially true, hence we suppose it is not zero. By making use of the commutation rule of covariant derivatives involving the Riemann curvature tensor \cite[p.~39]{Wald:1984rg}, we can rewrite \eqref{eq:Komar2} as
\begin{align}
\label{eq:commutations}
								&\nabla _\mu \nabla _\nu \nabla ^{[\nu}V^{\mu]} \nonumber \\
								&= \nabla _\mu \nabla _\nu \nabla ^\nu V^\mu-\nabla _\mu \nabla _\nu \nabla ^\mu V^\nu \nonumber \\
								&=\nabla _\nu \nabla _\mu \nabla ^\nu V^\mu-R_{\mu\nu\rho}{}^\nu\nabla ^\rho V^\mu-R_{\mu \nu \sigma}{}^\mu\nabla ^\nu V^\sigma \nonumber \\
								&\quad-\nabla _\nu \nabla _\mu \nabla ^\mu V^\nu+R_{\mu\nu\rho}{}^\mu\nabla ^\rho V^\nu+R_{\mu \nu \sigma}{}^\nu\nabla ^\mu V^\sigma \nonumber \\
								&=\nabla _\nu \nabla _\mu \nabla ^\nu V^\mu-R_{\mu\rho}\nabla ^\rho V^\mu+R_{\nu \sigma}\nabla ^\nu V^\sigma \nonumber \\
								&\quad-\nabla _\nu \nabla _\mu \nabla ^\mu V^\nu-R_{\nu\sigma}\nabla ^\sigma V^\nu+R_{\mu \rho}\nabla ^\mu V^\rho \nonumber \\
								&=\nabla _\nu \nabla _\mu \nabla ^{[\nu} V^{\mu]}+R_{\mu\rho}\nabla ^{[\mu} V^{\rho]}+R_{\nu\sigma}\nabla ^{[\nu} V^{\sigma]} \nonumber \\
								&=\nabla _\nu \nabla _\mu \nabla ^{[\nu}V^{\mu]},
\end{align}
i.e.,
\begin{equation}
\label{eq:anti0}
\nabla _\mu \nabla _\nu \nabla ^{[\nu}V^{\mu]}-\nabla _\nu \nabla _\mu \nabla ^{[\nu}V^{\mu]}=2\nabla _{[\mu} \nabla _{\nu]} \nabla ^{[\nu}V^{\mu]}\equiv 0.
\end{equation}
From \eqref{eq:anti0} it follows
\begin{equation}
\label{eq:simm}
\nabla _\mu \nabla _\nu \nabla ^{[\nu}V^{\mu]}=\nabla _{(\mu} \nabla _{\nu)} \nabla ^{[\nu}V^{\mu]}\equiv 0,
\end{equation}
since we are contracting a symmetric expression in $\mu, \nu$, with an antisymmetric expression in $\mu,\nu$. This completes the proof.

For a KVF, $\nabla ^{[\nu}\xi^{\mu]}=\nabla ^{\nu}\xi^{\mu}$ due to the Killing equation \eqref{eq:Kequation}, therefore
\begin{equation}
\label{eq:equivalence}
\nabla _\mu \nabla _\nu \nabla ^{[\nu}\xi^{\mu]} = \nabla _\mu \left(\nabla ^2 \xi^\mu \right)\equiv 0.
\end{equation}

\newpage

\bibliographystyle{apsrev4-1}

\begin{thebibliography}{52}%
\makeatletter
\providecommand \@ifxundefined [1]{%
 \@ifx{#1\undefined}
}%
\providecommand \@ifnum [1]{%
 \ifnum #1\expandafter \@firstoftwo
 \else \expandafter \@secondoftwo
 \fi
}%
\providecommand \@ifx [1]{%
 \ifx #1\expandafter \@firstoftwo
 \else \expandafter \@secondoftwo
 \fi
}%
\providecommand \natexlab [1]{#1}%
\providecommand \enquote  [1]{``#1''}%
\providecommand \bibnamefont  [1]{#1}%
\providecommand \bibfnamefont [1]{#1}%
\providecommand \citenamefont [1]{#1}%
\providecommand \href@noop [0]{\@secondoftwo}%
\providecommand \href [0]{\begingroup \@sanitize@url \@href}%
\providecommand \@href[1]{\@@startlink{#1}\@@href}%
\providecommand \@@href[1]{\endgroup#1\@@endlink}%
\providecommand \@sanitize@url [0]{\catcode `\\12\catcode `\$12\catcode
  `\&12\catcode `\#12\catcode `\^12\catcode `\_12\catcode `\%12\relax}%
\providecommand \@@startlink[1]{}%
\providecommand \@@endlink[0]{}%
\providecommand \url  [0]{\begingroup\@sanitize@url \@url }%
\providecommand \@url [1]{\endgroup\@href {#1}{\urlprefix }}%
\providecommand \urlprefix  [0]{URL }%
\providecommand \Eprint [0]{\href }%
\providecommand \doibase [0]{http://dx.doi.org/}%
\providecommand \selectlanguage [0]{\@gobble}%
\providecommand \bibinfo  [0]{\@secondoftwo}%
\providecommand \bibfield  [0]{\@secondoftwo}%
\providecommand \translation [1]{[#1]}%
\providecommand \BibitemOpen [0]{}%
\providecommand \bibitemStop [0]{}%
\providecommand \bibitemNoStop [0]{.\EOS\space}%
\providecommand \EOS [0]{\spacefactor3000\relax}%
\providecommand \BibitemShut  [1]{\csname bibitem#1\endcsname}%
\let\auto@bib@innerbib\@empty
\bibitem [{\citenamefont {Schwarzschild}(1916)}]{Schwarzschild:1916uq}%
  \BibitemOpen
  \bibfield  {author} {\bibinfo {author} {\bibfnamefont {K.}~\bibnamefont
  {Schwarzschild}},\ }\href@noop {} {\bibfield  {journal} {\bibinfo  {journal}
  {Sitzungsber. Preuss. Akad. Wiss. Berlin (Math. Phys.)}\ }\textbf {\bibinfo
  {volume} {1916}},\ \bibinfo {pages} {189} (\bibinfo {year} {1916})},\ \Eprint
  {http://arxiv.org/abs/physics/9905030} {arXiv:physics/9905030 [physics]}
  \BibitemShut {NoStop}%
\bibitem [{\citenamefont {Wald}(2010)}]{Wald:1984rg}%
  \BibitemOpen
  \bibfield  {author} {\bibinfo {author} {\bibfnamefont {R.}~\bibnamefont
  {Wald}},\ }\href {https://books.google.se/books?id=9S-hzg6-moYC} {\emph
  {\bibinfo {title} {General Relativity}}}\ (\bibinfo  {publisher} {University
  of Chicago Press},\ \bibinfo {year} {2010})\BibitemShut {NoStop}%
\bibitem [{\citenamefont {Hall}(2004)}]{hall2004symmetries}%
  \BibitemOpen
  \bibfield  {author} {\bibinfo {author} {\bibfnamefont {G.}~\bibnamefont
  {Hall}},\ }\href {https://books.google.se/books?id=VLHsCgAAQBAJ} {\emph
  {\bibinfo {title} {Symmetries and Curvature Structure in General
  Relativity}}},\ World Scientific Lecture Notes in Physics\ (\bibinfo {year}
  {2004})\BibitemShut {NoStop}%
\bibitem [{\citenamefont {Volkov}(2012)}]{PhysRevD.86.061502}%
  \BibitemOpen
  \bibfield  {author} {\bibinfo {author} {\bibfnamefont {M.~S.}\ \bibnamefont
  {Volkov}},\ }\href {\doibase 10.1103/PhysRevD.86.061502} {\bibfield
  {journal} {\bibinfo  {journal} {Phys. Rev. D}\ }\textbf {\bibinfo {volume}
  {86}},\ \bibinfo {pages} {061502} (\bibinfo {year} {2012})}\BibitemShut
  {NoStop}%
\bibitem [{\citenamefont {Nersisyan}\ \emph {et~al.}(2015)\citenamefont
  {Nersisyan}, \citenamefont {Akrami},\ and\ \citenamefont
  {Amendola}}]{Nersisyan:2015oha}%
  \BibitemOpen
  \bibfield  {author} {\bibinfo {author} {\bibfnamefont {H.}~\bibnamefont
  {Nersisyan}}, \bibinfo {author} {\bibfnamefont {Y.}~\bibnamefont {Akrami}}, \
  and\ \bibinfo {author} {\bibfnamefont {L.}~\bibnamefont {Amendola}},\ }\href
  {\doibase 10.1103/PhysRevD.92.104034} {\bibfield  {journal} {\bibinfo
  {journal} {Phys. Rev.}\ }\textbf {\bibinfo {volume} {D92}},\ \bibinfo {pages}
  {104034} (\bibinfo {year} {2015})},\ \Eprint
  {http://arxiv.org/abs/1502.03988} {arXiv:1502.03988 [gr-qc]} \BibitemShut
  {NoStop}%
\bibitem [{\citenamefont {Boulware}\ and\ \citenamefont
  {Deser}(1972)}]{Boulware:1973my}%
  \BibitemOpen
  \bibfield  {author} {\bibinfo {author} {\bibfnamefont {D.~G.}\ \bibnamefont
  {Boulware}}\ and\ \bibinfo {author} {\bibfnamefont {S.}~\bibnamefont
  {Deser}},\ }\href {\doibase 10.1103/PhysRevD.6.3368} {\bibfield  {journal}
  {\bibinfo  {journal} {Phys. Rev.}\ }\textbf {\bibinfo {volume} {D6}},\
  \bibinfo {pages} {3368} (\bibinfo {year} {1972})}\BibitemShut {NoStop}%
\bibitem [{\citenamefont {de~Rham}\ and\ \citenamefont
  {Gabadadze}(2010)}]{deRham:2010ik}%
  \BibitemOpen
  \bibfield  {author} {\bibinfo {author} {\bibfnamefont {C.}~\bibnamefont
  {de~Rham}}\ and\ \bibinfo {author} {\bibfnamefont {G.}~\bibnamefont
  {Gabadadze}},\ }\href {\doibase 10.1103/PhysRevD.82.044020} {\bibfield
  {journal} {\bibinfo  {journal} {Phys. Rev.}\ }\textbf {\bibinfo {volume}
  {D82}},\ \bibinfo {pages} {044020} (\bibinfo {year} {2010})},\ \Eprint
  {http://arxiv.org/abs/1007.0443} {arXiv:1007.0443 [hep-th]} \BibitemShut
  {NoStop}%
\bibitem [{\citenamefont {de~Rham}\ \emph {et~al.}(2011)\citenamefont
  {de~Rham}, \citenamefont {Gabadadze},\ and\ \citenamefont
  {Tolley}}]{deRham:2010kj}%
  \BibitemOpen
  \bibfield  {author} {\bibinfo {author} {\bibfnamefont {C.}~\bibnamefont
  {de~Rham}}, \bibinfo {author} {\bibfnamefont {G.}~\bibnamefont {Gabadadze}},
  \ and\ \bibinfo {author} {\bibfnamefont {A.~J.}\ \bibnamefont {Tolley}},\
  }\href {\doibase 10.1103/PhysRevLett.106.231101} {\bibfield  {journal}
  {\bibinfo  {journal} {Phys. Rev. Lett.}\ }\textbf {\bibinfo {volume} {106}},\
  \bibinfo {pages} {231101} (\bibinfo {year} {2011})},\ \Eprint
  {http://arxiv.org/abs/1011.1232} {arXiv:1011.1232 [hep-th]} \BibitemShut
  {NoStop}%
\bibitem [{\citenamefont {Hassan}\ and\ \citenamefont
  {Rosen}(2012{\natexlab{a}})}]{Hassan:2011hr}%
  \BibitemOpen
  \bibfield  {author} {\bibinfo {author} {\bibfnamefont {S.~F.}\ \bibnamefont
  {Hassan}}\ and\ \bibinfo {author} {\bibfnamefont {R.~A.}\ \bibnamefont
  {Rosen}},\ }\href {\doibase 10.1103/PhysRevLett.108.041101} {\bibfield
  {journal} {\bibinfo  {journal} {Phys. Rev. Lett.}\ }\textbf {\bibinfo
  {volume} {108}},\ \bibinfo {pages} {041101} (\bibinfo {year}
  {2012}{\natexlab{a}})},\ \Eprint {http://arxiv.org/abs/1106.3344}
  {arXiv:1106.3344 [hep-th]} \BibitemShut {NoStop}%
\bibitem [{\citenamefont {Hassan}\ and\ \citenamefont
  {Rosen}(2012{\natexlab{b}})}]{Hassan:2011zd}%
  \BibitemOpen
  \bibfield  {author} {\bibinfo {author} {\bibfnamefont {S.~F.}\ \bibnamefont
  {Hassan}}\ and\ \bibinfo {author} {\bibfnamefont {R.~A.}\ \bibnamefont
  {Rosen}},\ }\href {\doibase 10.1007/JHEP02(2012)126} {\bibfield  {journal}
  {\bibinfo  {journal} {JHEP}\ }\textbf {\bibinfo {volume} {02}},\ \bibinfo
  {pages} {126} (\bibinfo {year} {2012}{\natexlab{b}})},\ \Eprint
  {http://arxiv.org/abs/1109.3515} {arXiv:1109.3515 [hep-th]} \BibitemShut
  {NoStop}%
\bibitem [{\citenamefont {Hassan}\ and\ \citenamefont
  {Rosen}(2012{\natexlab{c}})}]{Hassan:2011ea}%
  \BibitemOpen
  \bibfield  {author} {\bibinfo {author} {\bibfnamefont {S.~F.}\ \bibnamefont
  {Hassan}}\ and\ \bibinfo {author} {\bibfnamefont {R.~A.}\ \bibnamefont
  {Rosen}},\ }\href {\doibase 10.1007/JHEP04(2012)123} {\bibfield  {journal}
  {\bibinfo  {journal} {JHEP}\ }\textbf {\bibinfo {volume} {04}},\ \bibinfo
  {pages} {123} (\bibinfo {year} {2012}{\natexlab{c}})},\ \Eprint
  {http://arxiv.org/abs/1111.2070} {arXiv:1111.2070 [hep-th]} \BibitemShut
  {NoStop}%
\bibitem [{\citenamefont {Hassan}\ and\ \citenamefont
  {Kocic}(2017)}]{Hassan:2017ugh}%
  \BibitemOpen
  \bibfield  {author} {\bibinfo {author} {\bibfnamefont {S.~F.}\ \bibnamefont
  {Hassan}}\ and\ \bibinfo {author} {\bibfnamefont {M.}~\bibnamefont {Kocic}},\
  }\href@noop {} {\  (\bibinfo {year} {2017})},\ \Eprint
  {http://arxiv.org/abs/1706.07806} {arXiv:1706.07806 [hep-th]} \BibitemShut
  {NoStop}%
\bibitem [{\citenamefont {Comelli}\ \emph {et~al.}(2012)\citenamefont
  {Comelli}, \citenamefont {Crisostomi}, \citenamefont {Nesti},\ and\
  \citenamefont {Pilo}}]{Comelli:2011wq}%
  \BibitemOpen
  \bibfield  {author} {\bibinfo {author} {\bibfnamefont {D.}~\bibnamefont
  {Comelli}}, \bibinfo {author} {\bibfnamefont {M.}~\bibnamefont {Crisostomi}},
  \bibinfo {author} {\bibfnamefont {F.}~\bibnamefont {Nesti}}, \ and\ \bibinfo
  {author} {\bibfnamefont {L.}~\bibnamefont {Pilo}},\ }\href {\doibase
  10.1103/PhysRevD.85.024044} {\bibfield  {journal} {\bibinfo  {journal} {Phys.
  Rev.}\ }\textbf {\bibinfo {volume} {D85}},\ \bibinfo {pages} {024044}
  (\bibinfo {year} {2012})},\ \Eprint {http://arxiv.org/abs/1110.4967}
  {arXiv:1110.4967 [hep-th]} \BibitemShut {NoStop}%
\bibitem [{\citenamefont {Hassan}\ \emph
  {et~al.}(2013{\natexlab{a}})\citenamefont {Hassan}, \citenamefont
  {Schmidt-May},\ and\ \citenamefont {von Strauss}}]{Hassan:2012wr}%
  \BibitemOpen
  \bibfield  {author} {\bibinfo {author} {\bibfnamefont {S.~F.}\ \bibnamefont
  {Hassan}}, \bibinfo {author} {\bibfnamefont {A.}~\bibnamefont {Schmidt-May}},
  \ and\ \bibinfo {author} {\bibfnamefont {M.}~\bibnamefont {von Strauss}},\
  }\href {\doibase 10.1007/JHEP05(2013)086} {\bibfield  {journal} {\bibinfo
  {journal} {JHEP}\ }\textbf {\bibinfo {volume} {05}},\ \bibinfo {pages} {086}
  (\bibinfo {year} {2013}{\natexlab{a}})},\ \Eprint
  {http://arxiv.org/abs/1208.1515} {arXiv:1208.1515 [hep-th]} \BibitemShut
  {NoStop}%
\bibitem [{\citenamefont {Hassan}\ \emph
  {et~al.}(2013{\natexlab{b}})\citenamefont {Hassan}, \citenamefont
  {Schmidt-May},\ and\ \citenamefont {von Strauss}}]{Hassan:2012rq}%
  \BibitemOpen
  \bibfield  {author} {\bibinfo {author} {\bibfnamefont {S.~F.}\ \bibnamefont
  {Hassan}}, \bibinfo {author} {\bibfnamefont {A.}~\bibnamefont {Schmidt-May}},
  \ and\ \bibinfo {author} {\bibfnamefont {M.}~\bibnamefont {von Strauss}},\
  }\href {\doibase 10.1088/0264-9381/30/18/184010} {\bibfield  {journal}
  {\bibinfo  {journal} {Class. Quant. Grav.}\ }\textbf {\bibinfo {volume}
  {30}},\ \bibinfo {pages} {184010} (\bibinfo {year} {2013}{\natexlab{b}})},\
  \Eprint {http://arxiv.org/abs/1212.4525} {arXiv:1212.4525 [hep-th]}
  \BibitemShut {NoStop}%
\bibitem [{\citenamefont {de~Rham}\ \emph {et~al.}(2015)\citenamefont
  {de~Rham}, \citenamefont {Heisenberg},\ and\ \citenamefont
  {Ribeiro}}]{deRham:2014naa}%
  \BibitemOpen
  \bibfield  {author} {\bibinfo {author} {\bibfnamefont {C.}~\bibnamefont
  {de~Rham}}, \bibinfo {author} {\bibfnamefont {L.}~\bibnamefont {Heisenberg}},
  \ and\ \bibinfo {author} {\bibfnamefont {R.~H.}\ \bibnamefont {Ribeiro}},\
  }\href {\doibase 10.1088/0264-9381/32/3/035022} {\bibfield  {journal}
  {\bibinfo  {journal} {Class. Quant. Grav.}\ }\textbf {\bibinfo {volume}
  {32}},\ \bibinfo {pages} {035022} (\bibinfo {year} {2015})},\ \Eprint
  {http://arxiv.org/abs/1408.1678} {arXiv:1408.1678 [hep-th]} \BibitemShut
  {NoStop}%
\bibitem [{\citenamefont {Hassan}\ \emph {et~al.}(2014)\citenamefont {Hassan},
  \citenamefont {Schmidt-May},\ and\ \citenamefont {von
  Strauss}}]{Hassan:2014vja}%
  \BibitemOpen
  \bibfield  {author} {\bibinfo {author} {\bibfnamefont {S.~F.}\ \bibnamefont
  {Hassan}}, \bibinfo {author} {\bibfnamefont {A.}~\bibnamefont {Schmidt-May}},
  \ and\ \bibinfo {author} {\bibfnamefont {M.}~\bibnamefont {von Strauss}},\
  }\href {\doibase 10.1142/S0218271814430020} {\bibfield  {journal} {\bibinfo
  {journal} {Int. J. Mod. Phys.}\ }\textbf {\bibinfo {volume} {D23}},\ \bibinfo
  {pages} {1443002} (\bibinfo {year} {2014})},\ \Eprint
  {http://arxiv.org/abs/1407.2772} {arXiv:1407.2772 [hep-th]} \BibitemShut
  {NoStop}%
\bibitem [{\citenamefont {Damour}\ and\ \citenamefont
  {Kogan}(2002)}]{Damour:2002ws}%
  \BibitemOpen
  \bibfield  {author} {\bibinfo {author} {\bibfnamefont {T.}~\bibnamefont
  {Damour}}\ and\ \bibinfo {author} {\bibfnamefont {I.~I.}\ \bibnamefont
  {Kogan}},\ }\href {\doibase 10.1103/PhysRevD.66.104024} {\bibfield  {journal}
  {\bibinfo  {journal} {Phys. Rev.}\ }\textbf {\bibinfo {volume} {D66}},\
  \bibinfo {pages} {104024} (\bibinfo {year} {2002})},\ \Eprint
  {http://arxiv.org/abs/hep-th/0206042} {arXiv:hep-th/0206042 [hep-th]}
  \BibitemShut {NoStop}%
\bibitem [{\citenamefont {Kocic}\ \emph
  {et~al.}(2017{\natexlab{a}})\citenamefont {Kocic}, \citenamefont {Högås},
  \citenamefont {Torsello},\ and\ \citenamefont {Mortsell}}]{Kocic:2017wwf}%
  \BibitemOpen
  \bibfield  {author} {\bibinfo {author} {\bibfnamefont {M.}~\bibnamefont
  {Kocic}}, \bibinfo {author} {\bibfnamefont {M.}~\bibnamefont {Högås}},
  \bibinfo {author} {\bibfnamefont {F.}~\bibnamefont {Torsello}}, \ and\
  \bibinfo {author} {\bibfnamefont {E.}~\bibnamefont {Mortsell}},\ }\href@noop
  {} {\  (\bibinfo {year} {2017}{\natexlab{a}})},\ \Eprint
  {http://arxiv.org/abs/1706.00787} {arXiv:1706.00787 [hep-th]} \BibitemShut
  {NoStop}%
\bibitem [{\citenamefont {Schmidt-May}\ and\ \citenamefont {von
  Strauss}(2016)}]{Schmidt-May:2015vnx}%
  \BibitemOpen
  \bibfield  {author} {\bibinfo {author} {\bibfnamefont {A.}~\bibnamefont
  {Schmidt-May}}\ and\ \bibinfo {author} {\bibfnamefont {M.}~\bibnamefont {von
  Strauss}},\ }\href {\doibase 10.1088/1751-8113/49/18/183001} {\bibfield
  {journal} {\bibinfo  {journal} {J. Phys.}\ }\textbf {\bibinfo {volume}
  {A49}},\ \bibinfo {pages} {183001} (\bibinfo {year} {2016})},\ \Eprint
  {http://arxiv.org/abs/1512.00021} {arXiv:1512.00021 [hep-th]} \BibitemShut
  {NoStop}%
\bibitem [{\citenamefont {Boulanger}\ \emph {et~al.}(2001)\citenamefont
  {Boulanger}, \citenamefont {Damour}, \citenamefont {Gualtieri},\ and\
  \citenamefont {Henneaux}}]{Boulanger:2000rq}%
  \BibitemOpen
  \bibfield  {author} {\bibinfo {author} {\bibfnamefont {N.}~\bibnamefont
  {Boulanger}}, \bibinfo {author} {\bibfnamefont {T.}~\bibnamefont {Damour}},
  \bibinfo {author} {\bibfnamefont {L.}~\bibnamefont {Gualtieri}}, \ and\
  \bibinfo {author} {\bibfnamefont {M.}~\bibnamefont {Henneaux}},\ }\href
  {\doibase 10.1016/S0550-3213(00)00718-5} {\bibfield  {journal} {\bibinfo
  {journal} {Nucl. Phys.}\ }\textbf {\bibinfo {volume} {B597}},\ \bibinfo
  {pages} {127} (\bibinfo {year} {2001})},\ \Eprint
  {http://arxiv.org/abs/hep-th/0007220} {arXiv:hep-th/0007220 [hep-th]}
  \BibitemShut {NoStop}%
\bibitem [{\citenamefont {Lang}(1995)}]{lang1995differential}%
  \BibitemOpen
  \bibfield  {author} {\bibinfo {author} {\bibfnamefont {S.}~\bibnamefont
  {Lang}},\ }\href {https://books.google.se/books?id=VfxGB5nYv1MC} {\emph
  {\bibinfo {title} {Differential and Riemannian Manifolds}}},\ Graduate Texts
  in Mathematics\ (\bibinfo  {publisher} {Springer},\ \bibinfo {year}
  {1995})\BibitemShut {NoStop}%
\bibitem [{\citenamefont {Misner}\ \emph {et~al.}(1973)\citenamefont {Misner},
  \citenamefont {Thorne},\ and\ \citenamefont
  {Wheeler}}]{misner1973gravitation}%
  \BibitemOpen
  \bibfield  {author} {\bibinfo {author} {\bibfnamefont {C.}~\bibnamefont
  {Misner}}, \bibinfo {author} {\bibfnamefont {K.}~\bibnamefont {Thorne}}, \
  and\ \bibinfo {author} {\bibfnamefont {J.}~\bibnamefont {Wheeler}},\ }\href
  {https://books.google.se/books?id=w4Gigq3tY1kC} {\emph {\bibinfo {title}
  {Gravitation}}},\ \bibinfo {series} {Gravitation}\ No.\ \bibinfo {number}
  {pt. 3}\ (\bibinfo  {publisher} {W. H. Freeman},\ \bibinfo {year}
  {1973})\BibitemShut {NoStop}%
\bibitem [{\citenamefont {Fuller}\ and\ \citenamefont
  {Wheeler}(1962)}]{PhysRev.128.919}%
  \BibitemOpen
  \bibfield  {author} {\bibinfo {author} {\bibfnamefont {R.~W.}\ \bibnamefont
  {Fuller}}\ and\ \bibinfo {author} {\bibfnamefont {J.~A.}\ \bibnamefont
  {Wheeler}},\ }\href {\doibase 10.1103/PhysRev.128.919} {\bibfield  {journal}
  {\bibinfo  {journal} {Phys. Rev.}\ }\textbf {\bibinfo {volume} {128}},\
  \bibinfo {pages} {919} (\bibinfo {year} {1962})}\BibitemShut {NoStop}%
\bibitem [{\citenamefont {Synge}(1960)}]{synge1960relativity}%
  \BibitemOpen
  \bibfield  {author} {\bibinfo {author} {\bibfnamefont {J.}~\bibnamefont
  {Synge}},\ }\href {https://books.google.se/books?id=CqoNAQAAIAAJ} {\emph
  {\bibinfo {title} {Relativity: the general theory}}},\ Series in physics\
  (\bibinfo  {publisher} {North-Holland Pub. Co.},\ \bibinfo {year}
  {1960})\BibitemShut {NoStop}%
\bibitem [{\citenamefont {Ibohal}(2005)}]{Ibohal:2004kk}%
  \BibitemOpen
  \bibfield  {author} {\bibinfo {author} {\bibfnamefont {N.}~\bibnamefont
  {Ibohal}},\ }\href {\doibase 10.1007/s10714-005-0002-6} {\bibfield  {journal}
  {\bibinfo  {journal} {Gen. Rel. Grav.}\ }\textbf {\bibinfo {volume} {37}},\
  \bibinfo {pages} {19} (\bibinfo {year} {2005})},\ \Eprint
  {http://arxiv.org/abs/gr-qc/0403098} {arXiv:gr-qc/0403098 [gr-qc]}
  \BibitemShut {NoStop}%
\bibitem [{\citenamefont {Kocic}\ \emph
  {et~al.}(2017{\natexlab{b}})\citenamefont {Kocic}, \citenamefont {Högås},
  \citenamefont {Torsello},\ and\ \citenamefont {Mortsell}}]{Kocic:2017hve}%
  \BibitemOpen
  \bibfield  {author} {\bibinfo {author} {\bibfnamefont {M.}~\bibnamefont
  {Kocic}}, \bibinfo {author} {\bibfnamefont {M.}~\bibnamefont {Högås}},
  \bibinfo {author} {\bibfnamefont {F.}~\bibnamefont {Torsello}}, \ and\
  \bibinfo {author} {\bibfnamefont {E.}~\bibnamefont {Mortsell}},\ }\href@noop
  {} {\  (\bibinfo {year} {2017}{\natexlab{b}})},\ \Eprint
  {http://arxiv.org/abs/1708.07833} {arXiv:1708.07833 [hep-th]} \BibitemShut
  {NoStop}%
\bibitem [{\citenamefont {Deffayet}\ and\ \citenamefont
  {Jacobson}(2012)}]{Deffayet:2011rh}%
  \BibitemOpen
  \bibfield  {author} {\bibinfo {author} {\bibfnamefont {C.}~\bibnamefont
  {Deffayet}}\ and\ \bibinfo {author} {\bibfnamefont {T.}~\bibnamefont
  {Jacobson}},\ }\href {\doibase 10.1088/0264-9381/29/6/065009} {\bibfield
  {journal} {\bibinfo  {journal} {Class. Quant. Grav.}\ }\textbf {\bibinfo
  {volume} {29}},\ \bibinfo {pages} {065009} (\bibinfo {year} {2012})},\
  \Eprint {http://arxiv.org/abs/1107.4978} {arXiv:1107.4978 [gr-qc]}
  \BibitemShut {NoStop}%
\bibitem [{\citenamefont {Torsello}\ \emph
  {et~al.}(2017{\natexlab{a}})\citenamefont {Torsello}, \citenamefont {Kocic},\
  and\ \citenamefont {M\"ortsell}}]{PhysRevD.96.064003}%
  \BibitemOpen
  \bibfield  {author} {\bibinfo {author} {\bibfnamefont {F.}~\bibnamefont
  {Torsello}}, \bibinfo {author} {\bibfnamefont {M.}~\bibnamefont {Kocic}}, \
  and\ \bibinfo {author} {\bibfnamefont {E.}~\bibnamefont {M\"ortsell}},\
  }\href {\doibase 10.1103/PhysRevD.96.064003} {\bibfield  {journal} {\bibinfo
  {journal} {Phys. Rev. D}\ }\textbf {\bibinfo {volume} {96}},\ \bibinfo
  {pages} {064003} (\bibinfo {year} {2017}{\natexlab{a}})},\ \Eprint
  {http://arxiv.org/abs/1703.07787} {arXiv:1703.07787 [gr-qc]} \BibitemShut
  {NoStop}%
\bibitem [{\citenamefont {Moschella}(2006)}]{Moschella2006}%
  \BibitemOpen
  \bibfield  {author} {\bibinfo {author} {\bibfnamefont {U.}~\bibnamefont
  {Moschella}},\ }\enquote {\bibinfo {title} {The de {Sitter} and anti-de
  {Sitter} {Sightseeing} {Tour}},}\ in\ \href {\doibase
  10.1007/3-7643-7436-5_4} {\emph {\bibinfo {booktitle} {Einstein, 1905--2005:
  Poincar{\'e} Seminar 2005}}},\ \bibinfo {editor} {edited by\ \bibinfo
  {editor} {\bibfnamefont {T.}~\bibnamefont {Damour}}, \bibinfo {editor}
  {\bibfnamefont {O.}~\bibnamefont {Darrigol}}, \bibinfo {editor}
  {\bibfnamefont {B.}~\bibnamefont {Duplantier}}, \ and\ \bibinfo {editor}
  {\bibfnamefont {V.}~\bibnamefont {Rivasseau}}}\ (\bibinfo  {publisher}
  {Birkh{\"a}user Basel},\ \bibinfo {address} {Basel},\ \bibinfo {year}
  {2006})\ pp.\ \bibinfo {pages} {120--133}\BibitemShut {NoStop}%
\bibitem [{\citenamefont {{Wolfram Research, Inc.}}(2016)}]{mathematica}%
  \BibitemOpen
  \bibfield  {author} {\bibinfo {author} {\bibnamefont {{Wolfram Research,
  Inc.}}},\ }\href@noop {} {\emph {\bibinfo {title} {{Mathematica, Version
  11.0}}}},\ \bibinfo {address} {Champaign, Illinois} (\bibinfo {year}
  {2016})\BibitemShut {NoStop}%
\bibitem [{\citenamefont {Hawking}\ and\ \citenamefont
  {Ellis}(2011)}]{Hawking:1973uf}%
  \BibitemOpen
  \bibfield  {author} {\bibinfo {author} {\bibfnamefont {S.~W.}\ \bibnamefont
  {Hawking}}\ and\ \bibinfo {author} {\bibfnamefont {G.~F.~R.}\ \bibnamefont
  {Ellis}},\ }\href {\doibase 10.1017/CBO9780511524646} {\emph {\bibinfo
  {title} {{The Large Scale Structure of Space-Time}}}},\ Cambridge Monographs
  on Mathematical Physics\ (\bibinfo  {publisher} {Cambridge University
  Press},\ \bibinfo {year} {2011})\BibitemShut {NoStop}%
\bibitem [{\citenamefont {Gibbons}\ and\ \citenamefont
  {Hawking}(1977)}]{PhysRevD.15.2738}%
  \BibitemOpen
  \bibfield  {author} {\bibinfo {author} {\bibfnamefont {G.~W.}\ \bibnamefont
  {Gibbons}}\ and\ \bibinfo {author} {\bibfnamefont {S.~W.}\ \bibnamefont
  {Hawking}},\ }\href {\doibase 10.1103/PhysRevD.15.2738} {\bibfield  {journal}
  {\bibinfo  {journal} {Phys. Rev. D}\ }\textbf {\bibinfo {volume} {15}},\
  \bibinfo {pages} {2738} (\bibinfo {year} {1977})}\BibitemShut {NoStop}%
\bibitem [{\citenamefont {Charmousis}(2011)}]{Charmousis2011}%
  \BibitemOpen
  \bibfield  {author} {\bibinfo {author} {\bibfnamefont {C.}~\bibnamefont
  {Charmousis}},\ }\enquote {\bibinfo {title} {Introduction to anti de sitter
  black holes},}\ in\ \href {\doibase 10.1007/978-3-642-04864-7_1} {\emph
  {\bibinfo {booktitle} {From Gravity to Thermal Gauge Theories: The AdS/CFT
  Correspondence}}},\ \bibinfo {editor} {edited by\
  \bibinfo {editor} {\bibfnamefont {E.}~\bibnamefont {Papantonopoulos}}}\
  (\bibinfo  {publisher} {Springer Berlin Heidelberg},\ \bibinfo {address}
  {Berlin, Heidelberg},\ \bibinfo {year} {2011})\ pp.\ \bibinfo {pages}
  {3--26}\BibitemShut {NoStop}%
\bibitem [{\citenamefont {Hawking}(1972)}]{Hawking1972}%
  \BibitemOpen
  \bibfield  {author} {\bibinfo {author} {\bibfnamefont {S.~W.}\ \bibnamefont
  {Hawking}},\ }\href {\doibase 10.1007/BF01877517} {\bibfield  {journal}
  {\bibinfo  {journal} {Communications in Mathematical Physics}\ }\textbf
  {\bibinfo {volume} {25}},\ \bibinfo {pages} {152} (\bibinfo {year}
  {1972})}\BibitemShut {NoStop}%
\bibitem [{\citenamefont {Henneaux}\ and\ \citenamefont
  {Teitelboim}(1985)}]{Henneaux1985}%
  \BibitemOpen
  \bibfield  {author} {\bibinfo {author} {\bibfnamefont {M.}~\bibnamefont
  {Henneaux}}\ and\ \bibinfo {author} {\bibfnamefont {C.}~\bibnamefont
  {Teitelboim}},\ }\href {\doibase 10.1007/BF01205790} {\bibfield  {journal}
  {\bibinfo  {journal} {Communications in Mathematical Physics}\ }\textbf
  {\bibinfo {volume} {98}},\ \bibinfo {pages} {391} (\bibinfo {year}
  {1985})}\BibitemShut {NoStop}%
\bibitem [{\citenamefont {Jamal}(2017)}]{Jamal2017}%
  \BibitemOpen
  \bibfield  {author} {\bibinfo {author} {\bibfnamefont {S.}~\bibnamefont
  {Jamal}},\ }\href {\doibase 10.1007/s10714-017-2253-4} {\bibfield  {journal}
  {\bibinfo  {journal} {General Relativity and Gravitation}\ }\textbf {\bibinfo
  {volume} {49}},\ \bibinfo {pages} {88} (\bibinfo {year} {2017})}\BibitemShut
  {NoStop}%
\bibitem [{\citenamefont {Komar}(1959)}]{PhysRev.113.934}%
  \BibitemOpen
  \bibfield  {author} {\bibinfo {author} {\bibfnamefont {A.}~\bibnamefont
  {Komar}},\ }\href {\doibase 10.1103/PhysRev.113.934} {\bibfield  {journal}
  {\bibinfo  {journal} {Phys. Rev.}\ }\textbf {\bibinfo {volume} {113}},\
  \bibinfo {pages} {934} (\bibinfo {year} {1959})}\BibitemShut {NoStop}%
\bibitem [{\citenamefont {Davis}\ and\ \citenamefont
  {Moss}(1966)}]{doi:10.1063/1.1705011}%
  \BibitemOpen
  \bibfield  {author} {\bibinfo {author} {\bibfnamefont {W.~R.}\ \bibnamefont
  {Davis}}\ and\ \bibinfo {author} {\bibfnamefont {M.~K.}\ \bibnamefont
  {Moss}},\ }\href {\doibase 10.1063/1.1705011} {\bibfield  {journal} {\bibinfo
   {journal} {Journal of Mathematical Physics}\ }\textbf {\bibinfo {volume}
  {7}},\ \bibinfo {pages} {975} (\bibinfo {year} {1966})},\ \Eprint
  {http://arxiv.org/abs/http://dx.doi.org/10.1063/1.1705011}
  {http://dx.doi.org/10.1063/1.1705011} \BibitemShut {NoStop}%
\bibitem [{\citenamefont {Husem{\"o}ller}(1994)}]{husemoller1994fibre}%
  \BibitemOpen
  \bibfield  {author} {\bibinfo {author} {\bibfnamefont {D.}~\bibnamefont
  {Husem{\"o}ller}},\ }\href {https://books.google.se/books?id=DPr\_BSH89cAC}
  {\emph {\bibinfo {title} {Fibre Bundles}}},\ Graduate Texts in Mathematics\
  (\bibinfo  {publisher} {Springer},\ \bibinfo {year} {1994})\BibitemShut
  {NoStop}%
\bibitem [{\citenamefont {Lee}(2003)}]{lee2003introduction}%
  \BibitemOpen
  \bibfield  {author} {\bibinfo {author} {\bibfnamefont {J.}~\bibnamefont
  {Lee}},\ }\href {https://books.google.se/books?id=eqfgZtjQceYC} {\emph
  {\bibinfo {title} {Introduction to Smooth Manifolds}}},\ Graduate Texts in
  Mathematics\ (\bibinfo  {publisher} {Springer},\ \bibinfo {year}
  {2003})\BibitemShut {NoStop}%
\bibitem [{\citenamefont {Chinea}(1988)}]{Chinea:1988}%
  \BibitemOpen
  \bibfield  {author} {\bibinfo {author} {\bibfnamefont {F.~J.}\ \bibnamefont
  {Chinea}},\ }\href {http://stacks.iop.org/0264-9381/5/i=1/a=018} {\bibfield
  {journal} {\bibinfo  {journal} {Classical and Quantum Gravity}\ }\textbf
  {\bibinfo {volume} {5}},\ \bibinfo {pages} {135} (\bibinfo {year}
  {1988})}\BibitemShut {NoStop}%
\bibitem [{\citenamefont {Choquet-Bruhat}\ \emph {et~al.}(1982)\citenamefont
  {Choquet-Bruhat}, \citenamefont {DeWitt-Morette},\ and\ \citenamefont
  {Dillard-Bleick}}]{choquet1982analysis}%
  \BibitemOpen
  \bibfield  {author} {\bibinfo {author} {\bibfnamefont {Y.}~\bibnamefont
  {Choquet-Bruhat}}, \bibinfo {author} {\bibfnamefont {C.}~\bibnamefont
  {DeWitt-Morette}}, \ and\ \bibinfo {author} {\bibfnamefont {M.}~\bibnamefont
  {Dillard-Bleick}},\ }\href {https://books.google.se/books?id=hUWEXphqLo8C}
  {\emph {\bibinfo {title} {Analysis, Manifolds, and Physics}}},\ \bibinfo
  {number} {pt. 1}\ (\bibinfo  {publisher} {North-Holland Publishing Company},\
  \bibinfo {year} {1982})\BibitemShut {NoStop}%
\bibitem [{\citenamefont {Hall}(2003)}]{hall2003lie}%
  \BibitemOpen
  \bibfield  {author} {\bibinfo {author} {\bibfnamefont {B.}~\bibnamefont
  {Hall}},\ }\href {https://books.google.se/books?id=m1VQi8HmEwcC} {\emph
  {\bibinfo {title} {Lie Groups, Lie Algebras, and Representations: An
  Elementary Introduction}}},\ Graduate Texts in Mathematics\ (\bibinfo
  {publisher} {Springer},\ \bibinfo {year} {2003})\BibitemShut {NoStop}%
\bibitem [{\citenamefont {Conlon}(2001)}]{Conlon:2001diff}%
  \BibitemOpen
  \bibfield  {author} {\bibinfo {author} {\bibfnamefont {L.}~\bibnamefont
  {Conlon}},\ }\href@noop {} {\emph {\bibinfo {title} {Differentiable
  Manifolds}}},\ Modern Birkh{\"a}user Classics\ (\bibinfo  {publisher}
  {Birkh{\"a}user Boston},\ \bibinfo {year} {2001})\BibitemShut {NoStop}%
\bibitem [{\citenamefont {Hall}\ and\ \citenamefont
  {McIntosh}(1983)}]{Hall:1983a}%
  \BibitemOpen
  \bibfield  {author} {\bibinfo {author} {\bibfnamefont {G.~S.}\ \bibnamefont
  {Hall}}\ and\ \bibinfo {author} {\bibfnamefont {C.~B.~G.}\ \bibnamefont
  {McIntosh}},\ }\href {\doibase 10.1007/BF02083290} {\bibfield  {journal}
  {\bibinfo  {journal} {International Journal of Theoretical Physics}\ }\textbf
  {\bibinfo {volume} {22}},\ \bibinfo {pages} {469} (\bibinfo {year}
  {1983})}\BibitemShut {NoStop}%
\bibitem [{\citenamefont {Hall}(1983)}]{Hall:1983b}%
  \BibitemOpen
  \bibfield  {author} {\bibinfo {author} {\bibfnamefont {G.~S.}\ \bibnamefont
  {Hall}},\ }\href {\doibase 10.1007/BF00759572} {\bibfield  {journal}
  {\bibinfo  {journal} {General Relativity and Gravitation}\ }\textbf {\bibinfo
  {volume} {15}},\ \bibinfo {pages} {581} (\bibinfo {year} {1983})}\BibitemShut
  {NoStop}%
\bibitem [{\citenamefont {Stephani}\ \emph {et~al.}(2003)\citenamefont
  {Stephani}, \citenamefont {Kramer}, \citenamefont {MacCallum}, \citenamefont
  {Hoenselaers},\ and\ \citenamefont
  {Herlt}}]{stephani_kramer_maccallum_hoenselaers_herlt_2003}%
  \BibitemOpen
  \bibfield  {author} {\bibinfo {author} {\bibfnamefont {H.}~\bibnamefont
  {Stephani}}, \bibinfo {author} {\bibfnamefont {D.}~\bibnamefont {Kramer}},
  \bibinfo {author} {\bibfnamefont {M.}~\bibnamefont {MacCallum}}, \bibinfo
  {author} {\bibfnamefont {C.}~\bibnamefont {Hoenselaers}}, \ and\ \bibinfo
  {author} {\bibfnamefont {E.}~\bibnamefont {Herlt}},\ }\href {\doibase
  10.1017/CBO9780511535185} {\emph {\bibinfo {title} {Exact Solutions of
  Einstein's Field Equations}}},\ \bibinfo {edition} {2nd}\ ed.,\ Cambridge
  Monographs on Mathematical Physics\ (\bibinfo  {publisher} {Cambridge
  University Press},\ \bibinfo {year} {2003})\BibitemShut {NoStop}%
\bibitem [{\citenamefont {Baccetti}\ \emph {et~al.}(2012)\citenamefont
  {Baccetti}, \citenamefont {Martin-Moruno},\ and\ \citenamefont
  {Visser}}]{Baccetti:2012re}%
  \BibitemOpen
  \bibfield  {author} {\bibinfo {author} {\bibfnamefont {V.}~\bibnamefont
  {Baccetti}}, \bibinfo {author} {\bibfnamefont {P.}~\bibnamefont
  {Martin-Moruno}}, \ and\ \bibinfo {author} {\bibfnamefont {M.}~\bibnamefont
  {Visser}},\ }\href {\doibase 10.1007/JHEP08(2012)148} {\bibfield  {journal}
  {\bibinfo  {journal} {JHEP}\ }\textbf {\bibinfo {volume} {08}},\ \bibinfo
  {pages} {148} (\bibinfo {year} {2012})},\ \Eprint
  {http://arxiv.org/abs/1206.3814} {arXiv:1206.3814 [gr-qc]} \BibitemShut
  {NoStop}%
\end{thebibliography}

\end{document}